\definecolor{lightgray} {gray}{0.7}
\newcommand{\R}{\mathbb R}
\newcommand{\N}{\mathbb N}
\def\be#1\ee{\begin{equation}#1\end{equation}}
\newcommand{\fer}[1]{(\ref{#1})}
\newcommand{\bq}{\begin{equation}}
\newcommand{\eq}{\end{equation}}
\def\bqa{\begin{eqnarray}}
\def\eqa{\end{eqnarray}}
\def\e{\epsilon}
\newcommand{\sign}{\mathrm{sign}}
\newcommand{\dd}{\mathrm{d}}
\newcommand{\bd}{\begin{displaymath}}
\newcommand{\ed}{\end{displaymath}}
\newcommand{\ba}{\begin{eqnarray}}
\newcommand{\ea}{\end{eqnarray}}
\def\N{\mathbb{N}}
\def\R{\mathbb{R}}
\def\CI {\mathcal{I}}
\def\e{\varepsilon}
\newtheorem{remark}{Remark}
\newtheorem{lemma}{Lemma}
\newtheorem{proposition}{Proposition}
\newenvironment{equations}{\equation\aligned}{\endaligned\endequation}
\begin{document}

%\markboth{A. Bondesan, G. Toscani, M. Zanella}{KINETIC COMPARTMENTAL MODELS DRIVEN BY OPINION DYNAMICS}

%%%%%%%%%%%%%%%%%%% Publisher's Area please ignore %%%%%%%%%%%%%%%%%%%%%%%
%
%\catchline{}{}{}{}{}
%
%%%%%%%%%%%%%%%%%%%%%%%%%%%%%%%%%%%%%%%%%%%%%%%%%%%%%%%%%%%%%%%%%%%%%%%%%%

\title[Kinetic compartmental models for vaccine hesitancy]{KINETIC COMPARTMENTAL MODELS DRIVEN BY\\ OPINION DYNAMICS: VACCINE HESITANCY \\ AND SOCIAL INFLUENCE
%\footnote{For the title, try not to 
%use more than 3 lines. Typeset the title in 10 pt 
%Times roman, uppercase and boldface.} 
}
%\titlerunning{}

\author[A. Bondesan]{Andrea Bondesan}
\address{Andrea Bondesan \hfill\break
	Department of Mathematics, University of Pavia, 
	via Ferrata 5, 27100 Pavia, Italy}
\email{andrea.bondesan@gmail.com}

\author[G. Toscani]{Giuseppe Toscani}
\address{Giuseppe Toscani \hfill\break
	Department of Mathematics, University of Pavia and IMATI CNR, 
	via Ferrata 5, 27100 Pavia, Italy}
\email{giuseppe.toscani@unipv.it}

\author[M. Zanella]{Mattia Zanella}
\address{Mattia Zanella \hfill\break
	Department of Mathematics, University of Pavia, 
	via Ferrata 5, 27100 Pavia, Italy}
\email{mattia.zanella@unipv.it}

\maketitle

\vspace*{-0.8cm}
\begin{abstract}
We propose a kinetic model for understanding the link between opinion formation phenomena and epidemic dynamics. The recent pandemic has brought to light that vaccine hesitancy can present different phases and temporal and spatial variations, presumably due to the different social features of individuals. The emergence of patterns in societal reactions permits to design and predict the trends of a pandemic. This suggests that the problem of vaccine hesitancy can be described in mathematical terms, by suitably coupling a kinetic compartmental model for the spreading of an infectious disease  with the evolution of the personal opinion of individuals, in the presence of leaders. The resulting model makes it possible to predict the collective compliance with vaccination campaigns as the pandemic evolves and to highlight the best strategy to set up for maximizing the vaccination coverage. We conduct numerical investigations which confirm the ability of the model to describe different phenomena related to the spread of an epidemic.
\end{abstract}

\vspace*{0.5cm}
\noindent \textbf{Keywords:} Kinetic equations; Mathematical epidemiology; Opinion dynamics; Vaccine hesitancy.

\vspace*{0.5cm}
\noindent \textbf{AMS Subject Classification:} 35Q84; 82B21; 91D10; 94A17.

\tableofcontents

%%%%%%%%%%%%%%%%%%%%%%%%%%%%%%%%%%%%%%%%%%%%%%%%%%%%%%%%%%%%
%%%%%%%%%%%%%%%%%%%%%%%%%%%%%%%%%%%%%%%%%%%%%%%%%%%%%%%%%%%%
%%%%%%%%%%%%%%%%%%%%    SECTION 1: INTRODUCTION    %%%%%%%%%%%%%%%%%%%%%%
%%%%%%%%%%%%%%%%%%%%%%%%%%%%%%%%%%%%%%%%%%%%%%%%%%%%%%%%%%%%
%%%%%%%%%%%%%%%%%%%%%%%%%%%%%%%%%%%%%%%%%%%%%%%%%%%%%%%%%%%%
\section{Introduction}

During the COVID-19 epidemic it was observed that, as cases of infection escalated, the collective adherence to so-called non pharmacological interventions (NPIs) was crucial to ensure public health in the absence of effective treatment \cite{APZ21,BC,BBDP,Vig,Z}. Indeed, the effectiveness of the measures proposed by policy-makers depends heavily on the compliance with them, which, in turn, depends on individuals' personal opinions about the necessity of social restrictions \cite{BRBU,DC,Tc,Tsao}. Similarly, as documented by Refs. \cite{Leu,Franceschi_etal,VBA}, vaccine hesitancy is deeply related to epidemic dynamics. 

Improving individual response to health protection is indeed a central issue in devising effective measures that lead to virtuous changes in daily social habits \cite{DFD}. To this extent,  it is important to study the close relationship between social features and the spreading of a pandemic \cite{BC,Bell,BB2,BRBU,DPTZ,DTZ,Tsao,Tu}, with the aim of coupling classical epidemiological models with opinion formation dynamics to understand the mutual influence of these phenomena in large many-agent systems that mimick real societies.\cite{GFPS,KGFPS,PC1,PC2,ZZ} 

The study of kinetic models for large interacting systems has gained increasing interest in recent years \cite{BB1,BCC,CFTV,CMPS,CPT,CPS,FHT,HT,MT}.  Among these, a prominent position has been taken by the phenomena of opinion formation, often described through the methods of statistical mechanics \cite{CFL,HK,SWS,W}.  A solid theoretical framework suitable for investigating emerging properties of opinion formation phenomena by means of mathematical models has been provided by classical kinetic theory \cite{CT,DMPW,DW,T,TTZ}.

 According to the kinetic description \cite{PT13}, the structure of the elementary variations of opinion  leads at the macroscopic scale to an  equilibrium distribution which describes the formation of a relative consensus about certain opinions \cite{PTTZ,T,TTZ}.  Analogously to what happens in kinetic theory of rarefied gases, where elastic collisions generate equilibrium distributions in the form of Maxwellian (Gaussian) densities, in opinion dynamics the equilibrium may assume the form of a Beta distribution \cite{FPTT,T}. In some cases,  deviation from global consensus  appears in the form of opinion polarization, describing a marked divergence away from central positions toward the extremes \cite{LRT}. This latter feature of the agents' opinion distribution is frequently observed in problems of choice formation \cite{ANT}.

Among other social aspects of opinion formation which deserve to be investigated in reason of their importance for individual health, vaccine hesitancy is one of the most interesting \cite{BDMOG,BRBU,DMLM,Tsao,Trentini}.  
This phenomenon has been shown to be an important feature in the evolution of the COVID-19 pandemic (cf.  Ref. \cite{Kum} and the references therein). In agreement with the analysis of Ref. \cite{Kum}, vaccine hesitancy can be classified as a phenomenon of clear social significance. Indeed, vaccine hesitancy has exhibited different phases, subject to both temporal and spatial variation, mainly due  to varied socio-behavioral characteristics of individuals and their response toward the COVID-19 pandemic and its vaccination strategies.  According to this picture, individuals in the society can be split in different groups, characterized by  their degree of susceptibility to vaccination:  Vaccine Eagerness, Vaccine Ignorance, Vaccine Resistance, Vaccine Confidence, Vaccine Complacency and Vaccine Apathy. The detailed description of these phases can be achieved by a careful reading of Ref. \cite{Kum}. Although the division of the population into different groups characterized by the type of vaccine hesitancy is very useful in order to best acquire the tools to limit its size, the boundaries between the groups are very blurred, as individuals belonging by convention to different groups may have very similar views regarding their desirability of vaccination. This convention is also present in other social phenomena, such as the distribution of wealth in Western societies, where the individuals are conventionally divided into classes (typically the poor class, the middle class and the rich class) and wealth varies continuously across classes.  

For this reason, it can be assumed that vaccine hesitancy is triggered by an opinion formation dynamics of each individual. We will assume that agents' opinion can be measured by a real value $w \in [-1,1]$, where the extremes $-1$ and $+1$ are associated respectively with complete unwillingness and complete willingness to undergo vaccination. Clearly, the value $w=0$ will correspond to vaccine apathy, or indifference to undergo vaccination. Concisely, in the rest of the paper we will define the state $w \in [-1,1]$ of an individual toward vaccination simply as \emph{opinion}. Next, following the well-known strategy of kinetic theory, we will classify the status of a population regarding the vaccination in terms of the statistical distribution of their opinions, say $f=f(w,t)$, where $w\in [-1,1]$ and $t \ge 0$.  If one agrees with this representation, various techniques of kinetic theory are available to study the evolution in time of the statistical distribution $f(w,t)$ \cite{PT13,T}.

According to the social description of opinion formation \cite{PT13,T}, changes in the  distribution $f$  over time are consequent to elementary interactions that mainly take into account the effects of compromise and self-thinking, which quantify the variation of the individual's opinion with respect to the group's opinion (the former) and the variation of opinion due to personal beliefs (the latter). In presence of a huge number of individuals undergoing elementary interactions, it can be shown that the statistical distribution $f=f(w,t)$ of opinions quickly relaxes in time toward a stationary profile, that in various relevant cases takes the form of a Beta distribution. 

Unlike the classical case where it is assumed that the mean opinion of the society does not vary with time, we want to study a situation in which the mean opinion about vaccination is subject to  temporal  variations, to account for the evolving socio-behavioral characteristics of individuals and their response toward the pandemic spreading and the vaccination strategies.

Following the recent contributions of Refs. \cite{DPaTZ,DPTZ,DTZ,Z}, this goal will be achieved by studying the evolution of the distribution function  of the vaccine hesitancy of the population  with a classical compartmental model for the spreading of a disease \cite{DH}. This coupling will act in both directions. On the one side, the presence of the disease will lead in general to a decrease of the vaccine hesitancy  of individuals (moving $w$ to the right), because of the perceived increase in risks associated with exposure to the infection \cite{DC,VBA}, and because of possible pro-vaccination campaigns, characterized in the present work by the so-called opinion leaders \cite{DMPW,DW}. On the other side,  phases of stagnation or decline in the epidemic spreading naturally lead to a growth in the vaccine hesitancy of individuals (moving $w$ to the left), modeled here by introducing a dependence of the group's mean vaccination rate on the number of infected individuals.

The rest of the paper is organized as follows. In Section \ref{sec:social} we introduce a kinetic compartment model coupling epidemic and opinion formation dynamics in the presence of vaccination. The effects of leader--follower interactions are discussed in Section \ref{sec:leader}. The main properties of the resulting model are then discussed in Section \ref{sec:analysis}. Finally, in Section \ref{sec:num} we present several numerical tests based on the proposed modelling setting, to shade light on the links among the behavioral aspects that may influence a vaccination campaign.

%%%%%%%%%%%%%%%%%%%%%%%%%%%%%%%%%%%%%%%%%%%%%%%%%%%%%%%%%%%%
%%%%%%%%%%%%%%%%%%%%%%%%%%%%%%%%%%%%%%%%%%%%%%%%%%%%%%%%%%%%
%%%%%%%%%%%%%%%%%%    SECTION 2: BEHAVIORAL APPROACH    %%%%%%%%%%%%%%%%%%%
%%%%%%%%%%%%%%%%%%%%%%%%%%%%%%%%%%%%%%%%%%%%%%%%%%%%%%%%%%%%
%%%%%%%%%%%%%%%%%%%%%%%%%%%%%%%%%%%%%%%%%%%%%%%%%%%%%%%%%%%%
\section{Kinetic models in compartmental epidemiology}\label{sec:social}
 
Classical epidemiological models rely on a compartmentalization of the population, in the following denoted by the ordered set $\mathcal C=\{J_1,\dots,J_n\}$, $n \ge 1$, in which the agents composing the system are split into epidemiologically relevant states, meaning that each state contributes to determine the evolution of the epidemic, for example the susceptible individuals who can contract the disease, or the infected and infectious agents. 
 
Given a statistical understanding on the agents' behaviour, a possible strategy to determine the transition rates between compartments relies on kinetic theory. Indeed, thanks to the methods of kinetic models for social phenomena, the emergence of collective structures and patterns determines at the observable level the evolution of the epidemic (cf. Ref. \cite{ABBDPTZ} and the references therein).

In the kinetic approach, agents in the compartment $J_i \in \mathcal C$, $1 \leq i \leq n$, are characterized by a social variable denoted by $w\in  \Omega \subset \R$ (number of daily  contacts, opinion, $\dots$), whose statistical description is obtained through the distribution function $f_{J_i}(w, t)$, $t \ge 0$. This means that
 $f_{J_i} = f_{J_i}(w,t)$ is such that $f_{J_i}(w,t)\dd w$ represents the fraction of agents with social variable in $[w,w+\dd w]$ at time $t\ge 0$ belonging to the $i$-th compartment. Furthermore, it is usually assumed that 
\[
\sum_{i=1}^n f_{J_i}(w,t) = f(w,t), \qquad  \int_{\Omega} f(w,t) \dd w = 1.
\]
The description of the time evolution of the disease in terms of the social variable $w$ is obviously richer than the classical description. Indeed, the knowledge of the distribution functions  $f_{J_i}(w, t)$, $J_i \in \mathcal C$, allows to compute, in addition to the mass fractions of the population in each compartment,  the  relevant moments of order $r \in \N$, that are
\begin{equation}
\label{eq:rho-m}
\rho_{J_i}(t) = \int_\Omega f_{J_i}(w,t)\dd w, \qquad  m_{r,J_i}(t) = \frac{1}{\rho_{J_i}(t)} \int_\Omega w^r f_{J_i}(w,t) \dd w. 
\end{equation}

As proposed in Ref. \cite{DPaTZ},  the time evolution of the functions $f_{J_i}(w,t)$ is obtained by coupling the compartmental epidemiological description with the kinetic equilibration of the social variable. Let $\textbf{f}(w,t)$ denote the $n$-dimensional vector with components the functions $f_{J_i}(w,t)$. This merging can then be concisely expressed  by the system
\be\label{kin1}
\frac{\partial\textbf{f}(w,t)}{\partial t} = \textbf{P}(w, \textbf{f}(w,t)) + \frac 1\tau \textbf{Q}(\textbf{f}(w,t)).
\ee
where  $\textbf Q(\cdot,\cdot)$ is a $n$-dimensional vector whose components $Q_{J_i}(\textbf{f})$, $J_i \in \mathcal C$, are suitable relaxation operators describing the formation of the equilibrium distribution of the social variable in the underlying compartments. Moreover, $\tau > 0$ measures the time-scale needed to reach these local equilibrium distributions in the relaxation process.

We underline that the model we propose in this work is truly multiscale, since it interfaces the dynamics of the epidemic with the one of the individuals which, in principle, are different. Indeed, the time-scale at which individuals adapt their social behavior to the presence of the pandemic is much faster than the typical time taken by the pandemic itself to evolve.

%%%%%%%%%%%%%%%%%    SUBSECTION: CONSENSUS FORMATION    %%%%%%%%%%%%%%%%%%
\subsection{Consensus formation and epidemic dynamics}\label{sec:consensus}
 
In this section we will focus on the case where the social variable is represented by the personal opinion of individuals.\cite{ZZ} Without loss of generality,  we will concentrate on a SEIRV compartmentalization of the population, in which the system of agents is subdivided in the following states: susceptible (S) agents, that can contract the disease; infectious (I)  agents, responsible for the spread of the disease; exposed (E) agents, who have been infected but are still not contagious; vaccinated (V) agents, who received the vaccine and are now immune; and, finally, removed (R) agents, who cannot spread the disease. Each agent is endowed with an opinion variable $w \in \mathcal{I}$ which varies continuously in the interval $\mathcal{I} = [-1,1]$, where $-1$ and $+1$ denote the two opposite beliefs regarding the protective behavior.  In particular, the value $w = -1$ is associated with agents that do not believe in the necessity of protections (like wearing masks or reducing daily contacts), whereas $w = 1$ is associated with agents that are in complete agreement with a protective behavior. It will be also assumed that agents characterized by a high protective behavior are less likely to get the infection. 

Following the notations of Section \ref{sec:social},  we denote by $f_J(w,t)$ the distribution of opinions at time $t \ge 0$ of the agents in the compartment $J \in \mathcal C = \{S,E,I,R,V\}$. \\
 According to Ref. \cite{ZZ}, the explicit form of the kinetic system \fer{kin1} for the coupled evolution of opinions and disease is given by
 \begin{equation}
\label{eq:kinetic_opinion}
\begin{split}
\frac{\partial f_S(w,t)}{\partial t} &= -K_T(f_S,f_I)(w,t) - K_V(f_S) + \dfrac{1}{\tau} Q_{S}(f_S)(w,t), \\
\frac{\partial f_E(w,t)}{\partial t} &= K_T(f_S,f_I)(w,t) - \sigma_E f_E(w,t) + \dfrac{1}{\tau}Q_{E}(f_E)(w,t), \\
\frac{\partial f_I(w,t)}{\partial t} &= \sigma_E f_E(w,t) - \gamma_I f_I(w,t) + \dfrac{1}{\tau} Q_{I}(f_I)(w,t), \\
\frac{\partial f_R(w,t)}{\partial t} &= \gamma_I f_I(w,t) + \dfrac{1}{\tau}Q_{R}(f_R)(w,t), \\
\frac{\partial f_V(w,t)}{\partial t} &=  K_V(f_S) + \dfrac{1}{\tau} Q_{V}(f_V)(w,t),
\end{split}
\end{equation}
In system \fer{eq:kinetic_opinion}, $\tau > 0$ is the relaxation time and $Q_{J}(\cdot)$ is the kinetic operator that quantifies the variation of the opinions of agents in the compartment $J \in \mathcal C$. The precise form of these operators will be specified in the next Section \ref{sec:opinion}. The parameter $\sigma_E>0$ is such that $1/\sigma_E$ measures the mean latent period of the infection, whereas $\gamma_I > 0$ is such that $1/\gamma_I>0$ is the mean infectious period \cite{DH}. 

We have denoted with $K_T(\cdot,\cdot)$ the local incidence rate governing the transmission dynamics, that has the form
\begin{equation}
\label{eq:K_def}
K_T(f_S,f_I)(w,t) =  f_S(w,t) \int_{\mathcal{I}} \kappa_T(w,w_*) f_I(w_*,t)dw_*.
\end{equation}
This operator drives the transmission of the infection in terms of the opinion variable $w$. In \fer{eq:K_def}, $\kappa_T(w,w_*)$ is a nonnegative decreasing function measuring the impact of the protective behavior on the interactions between susceptible and infectious agents. A leading example for the function $\kappa_T(w,w_*)$ can be obtained by assuming 
\begin{equation}
\label{eq:kappa_def}
\kappa_T(w,w_*) = \dfrac{\beta_T}{4^{\alpha_T}} (1-w)^{\alpha_T} (1-w_*)^{\alpha_T},
\end{equation}
where $\beta_T > 0$ is the baseline transmission rate characterizing the epidemics and $\alpha_T > 0$ is a coefficient linked to the efficacy of the protective measures. Thus, the local rate of incidence, and consequently the evolution of the disease, is fully dependent on the joint opinion  of agents with respect to the protective behavior. In Figure \ref{fig:kap}, the function $\kappa_T(\cdot,\cdot)$ in the particular form \eqref{eq:kappa_def} is represented for several values of $\alpha_T > 0$. 

Furthermore, in  \eqref{eq:kinetic_opinion} the function $K_V(\cdot)$ denotes the local vaccination rate, which characterizes the vaccination dynamics in terms of the opinion variable and has the form 
\begin{equation}\label{eq:KVdef}
K_V(f_S)(w,t) = f_S(w,t)  \kappa_V(w), 
\end{equation}
where $\kappa_V(\cdot)$ is a nonnegative increasing function that expresses the vaccination propensity of the susceptible agents with opinion $w \in [-1,1]$. A prototypical example is obtained with 
\begin{equation}
\label{eq:kappav_def}
\kappa_V(w) = \frac{\beta_V}{2^{\alpha_V}}(1+w)^{\alpha_V},
\end{equation}
which gives 
\begin{equation}
K_V(f_S)(w,t) = \dfrac{\beta_V}{2^{\alpha_V}} f_S(w,t)(1+w)^{\alpha_V},
\end{equation}
where $\beta_V>0$ is the baseline vaccination rate and $\alpha_V>0$ is a coefficient measuring the impact of the epidemic on the vaccination campaign. 

\begin{figure}
\centering
\includegraphics[scale = 0.3]{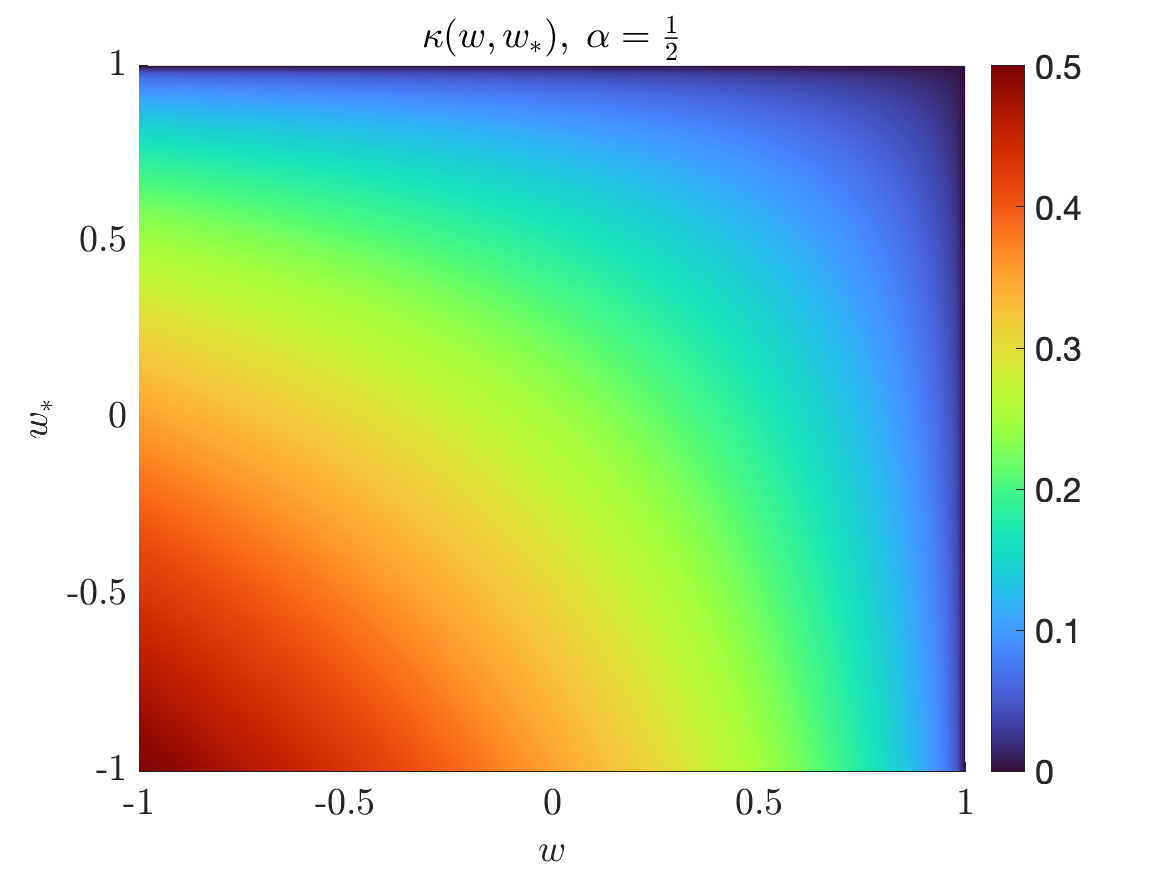}
\includegraphics[scale = 0.3]{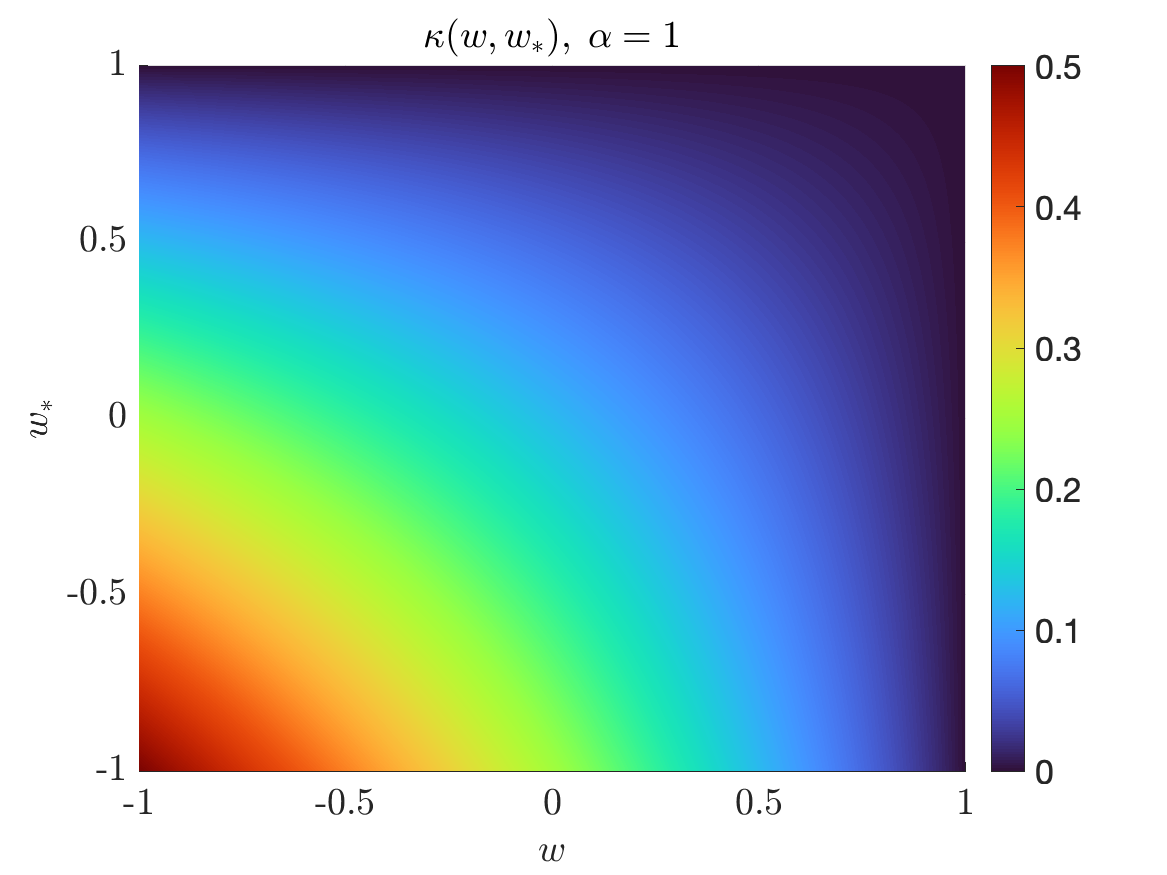}
\caption{We sketch the function $\kappa_T(w,w_*)$ in \eqref{eq:kappa_def} for $\alpha_T=\frac{1}{2}$ (left) and $\alpha_T = 1$ (right). In both cases, we fixed the coefficient $\beta_T = \frac{1}{2}$. }
\label{fig:kap}
\end{figure}

%%%%%%%%%%%%%%%%%%%%    SUBSECTION: OPINION FORMATION   %%%%%%%%%%%%%%%%%%%
\subsection{Kinetic modelling in opinion dynamics}\label{sec:opinion}

In this section we introduce the kinetic operators $Q_J(\cdot)$, $J \in \mathcal C$, which determine the relaxation of the kinetic density $f_J$ toward its equilibrium. To this end, we seek to identify a model-based scenario for the evolution of opinion-type dynamics in the presence of an epidemic. In particular, we intend to describe the evolution of the vaccine hesitancy of a population by resorting to classical methods of statistical mechanics, and in particular by emphasizing its analogies with the  process of opinion formation, well studied  within the toolbox of classical kinetic theory \cite{PT13}. In this way, one will be able to present an almost uniform description of vaccine hesitancy in terms of few simple rules. The kinetic modeling of opinion formation in terms of Fokker--Planck-type equations was introduced  in 2006 in Ref. \cite{T}, and was subsequently generalized in many ways (see Refs. \cite{NPT,PT13} for recent surveys).

Kinetic-type models for opinion formation are classically built under the fundamental assumption of indistinguishable agents \cite{PT13}, which corresponds to assume that each agent of the system is characterized by its opinion $w \in \CI$. Then, the evolution of the density of agents in each compartment $f_J = f_J(w, t)$, $w \in \CI$, $t \ge 0$, is determined by repeated interactions. In the following we will assume that each agent modifies its opinion through a simple interaction with an opinion distribution acting as a background. We denote by $g(z)$ the static background opinion distribution 
\be\label{bg}
\int_{\mathcal I}g(z)\dd z = 1, \qquad \int_{\mathcal I} zg(z)\dd z = \mu \in \mathcal I. 
\ee
 Following Ref. \cite{T}, the microscopic change of opinion is modeled as
\begin{equation}\label{colli-op}
w^\prime = w + \lambda_J (z-w) + D(w) \eta_J, 
\end{equation}
where $\lambda_J>0$, $z$ is a realization of the background opinion distribution $g(z)$, while the $\eta_J$, $J \in \mathcal{C}$, are centered random variables such that $\left \langle \eta_J^2 \right\rangle = \sigma_J^2$. Furthermore, in \fer{colli-op} we have introduced the constant $\mu \in \CI$, which denotes a background average opinion. The constant $\lambda_J$ and the variance $\sigma_J^2$ of each random variable $\eta_J$ measure respectively the propensity to move toward the average opinion in society (the compromise) and the degree of spreading of opinion because of diffusion, which describes possible changes of opinion due for example to personal access to information (self-thinking). Finally, the function $D(\cdot)$ takes into account the local relevance of self-thinking.  Following Ref. \cite{T}, we may assume
\be\label{di}
D(w) =\sqrt{1-w^2},
\ee
 By classical methods of kinetic theory \cite{PT13}, and resorting to the derivation of the classical linear Boltzmann equation of elastic rarefied gases \cite{Cer}, one can easily show that  the time variation of the opinion density depends on a sequence of elementary  variations of type \fer{colli-op}.  In our case, the  density $f_J(w,t)$ of each agents' compartment 
obeys, for all smooth functions $\varphi(w)$ (the observable quantities), to the linear integro-differential equation 
 \begin{equation} \label{line-op}
 \frac{\dd}{\dd t}\int_{\mathcal{I}}\varphi(w) f_J(w,t)\,\dd w  = \frac 1{\tau}
 \int_{\CI}    \big\langle \varphi(w^\prime)-\varphi(w) \big\rangle f_J(w,t)g(z)\,\dd w \dd z,
 \end{equation}
 where $\tau > 0$ is a suitable relaxation parameter.  In \fer{line-op} the mathematical expectation $\langle \cdot \rangle$  takes into account the presence of the random variable $\eta_J$ in \fer{colli-op}. Then,  the \emph{quasi-invariant opinion limit} is obtained by scaling all the parameters in the kinetic equation \fer{line-op} through a parameter $\e \ll 1$ as
  \[
  \sigma_J^2 \to \e \sigma_J^2, \quad \lambda_J \to \e \lambda_J, \quad \tau \to \e \tau,
  \]
which implies that the interaction \fer{colli-op} produces only an extremely small variation of the post-interaction opinion, while the frequency of interactions is increased accordingly. Note that in this scaling, the evolution of the mean  opinion value does not depend on $\e$. Also, the same property holds for the constant $\nu_J = \sigma_J^2/\lambda_J$. 
 As exhaustively explained in Ref. \cite{FPTT17}, this asymptotic procedure is a well-consolidated technique which has been first developed for the classical Boltzmann equation \cite{Vi,vil2}, where it is known under the name of \emph{grazing collision limit}. Then, taking the limit $\e \to 0$ \cite{PT13,T}, the solution to the kinetic equation converges to the solution of a Fokker--Planck-type equation defined as follows
 \begin{equation}\label{eq:FPmu}
 \dfrac{\partial  f_J(w,t)}{\partial t} = \dfrac{1}{\tau}\left[ \dfrac{\nu_J}{2} \dfrac{\partial^2}{\partial w^2}((1-w^2) f_J(w,t)) + \dfrac{\partial}{\partial w}(w-\mu) f_J(w,t)\right], 
 \end{equation}
 where $\nu_J = \sigma_J^2/\lambda_J$. The Fokker--Planck equation \fer{eq:FPmu} is then complemented by no-flux boundary conditions, for any $t>0$,
 \[\begin{split}
& (w-\mu)f_J(w,t) + \dfrac{\nu_J}{2}\dfrac{\partial}{\partial w}((1-w^2)f_J(w,t)) \Big|_{w = \pm 1} = 0, \\
 & (1-w^2)f_J(w,t) \Big|_{w = \pm 1} = 0. 
 \end{split}\]
 It is important to notice that the parameter  $\sigma_J^2$, namely the variance of the random variable $\eta_J$ measuring the compartmental self-thinking, appears in equation \fer{eq:FPmu} as the coefficient of the diffusion operator, while the parameter $\lambda_J$ measuring the compromise is the coefficient of the drift term. Consequently, small values of the parameter $\nu_J$ correspond to \emph{compromise dominated} dynamics, while large values of $\nu_J$ denote  \emph{self-thinking dominated} dynamics. 
  
In particular, the equilibrium distribution is explicitly computable for any $J \in \mathcal{C}$ and is given by the Beta densities
\begin{equation} \label{equ}
f^{\infty}_J(w) = C_{\mu,\nu_J} (1 - w)^{-1 + \frac{1 - \mu}{\nu_J}} (1 + w)^{-1 + \frac{1 + \mu}{\nu_J}},
 \end{equation}
where $C_{\mu,\nu_J}>0$ is a suitable normalization constant.

It is important to underline that a marked advantage of the grazing procedure is the ability to explicitly compute the macroscopic equilibrium  density \fer{equ} and its principal moments in dependence of the characteristic constants $\mu$ and $\nu_J$.

%%%%%%%%%%%%%%%%%%    SUBSECTION: VACCINE HESITANCY    %%%%%%%%%%%%%%%%%%%
 \subsection{Vaccine hesitancy and opinion formation}\label{sec:hesitancy}
  
 In this section, we couple the opinion formation dynamics defined in Section \ref{sec:opinion} with the evolution of an epidemic. To this extent, we generalize the microscopic interactions with the environment by suitably modifying the average opinion $\mu$ of the background distribution \fer{bg}, which is now dependent on the number of infected individuals at time $t\ge0$, given by  $\rho_I(t) = \int_{\mathcal I}f_I(w,t)\dd w$. In particular, denoting by $g(z,t)$ the new evolving background opinion distribution, we assume that, for any $t \geq 0$,
\[
\int_{\mathcal I}g(z,t)\dd z = 1, \qquad \int_{\mathcal I} z g(z,t) \dd z = \mu(\rho_I(t)) \in \mathcal I,
\]
This can be easily obtained by setting
\[
g(z,t)  = \frac 1{A(\rho_I(t))} g\left(\frac z{A(\rho_I(t))} \right), 
\] 
where  $\mu A(s) = \mu(s)$ is given by
 \be\label{risk}
 \mu(s) = 1 - 2(s - 1)^{2 n},
 \ee
with $s \in [0,1]$ and $n \in \mathbb N$ being a positive constant. Under the introduced assumption, we may observe that $\mu(s)$ is an increasing function and $\mu\equiv -1$ if $s = 0$, i.e. in the absence of infected individuals, whereas $\mu \equiv 1$ if $s = 1$, i.e. in the case where the infected individuals represent the whole population. This can be interpreted as a reaction of the individuals to the current evolution of the epidemic, so that the presence of many infected leads the population to protective behaviors (since in this case $\mu$ tends to $1$) in an effort to reduce the cases, while a reduction in the epidemic spreading pushes the population toward more risky decisions as the compartmental mean now is driven by $\mu$ to $-1$. In particular, the constant $n$ is related to the intensity of the risk perception. Indeed, as observed in Ref. \cite{VBA},  a key factor that moves people to vaccinate  is the perception of risks associated with the infection.  Risk perception is central to most health behavior theories and in this line, a recent analysis   based on  15,000 individuals, presented in  Ref. \cite{PB}, revealed a significant association between vaccine uptake and the perceived likelihood of infection. 

According to the discussion of Section \ref{sec:opinion}, the distribution function of the vaccine hesitancy of individuals, say $f_J(w,t)$, with $J \in \mathcal C$ denoting the various compartments of the population, is now driven by the Fokker--Planck equation
\be\label{op-FPS}
 \frac{\partial f_J(w,t)}{\partial t} = \frac{\nu_J}{2} \frac{\partial^2 }{\partial w^2}\left((1-w^2)
 f_J(w,t)\right) + \frac{\partial }{\partial w}\left((w - \mu(\rho_I))f_J(w,t)\right),
 \ee
with $\nu_J = \sigma_J^2/\lambda_J$, coupled with the no-flux boundary conditions, for any $t > 0$,
\begin{equation} \label{op-FPS-BC}
\begin{aligned}
&(w - \mu(\rho_I)) f_J(w,t) + \frac{\nu_J}{2} \frac{\partial}{\partial w}\left((1-w^2) f_J(w,t)\right) \Big|_{w = \pm 1} = 0, & \\
&(1-w^2) f_J(w,t) \Big|_{w = \pm 1} = 0.
\end{aligned}
\end{equation}

\begin{remark}
We highlight that equation \eqref{op-FPS} is mass-preserving, so that the quantities $\rho_J(t)$ remains constant over time, for any $J \in \mathcal{C}$. Since this is true in particular for the number of infected individuals $\rho_I(t)$, in this part $\mu(\rho_I)$ simply denotes a constant value ranging in the interval $\mathcal{I}$. However, we underline that the temporal evolution of $\rho_I(t)$ will be taken into account in our following analysis, where we introduce the full kinetic model \eqref{eq:kinetic_hesitancy}. There, mass conservation will be lost in the interaction of the Fokker--Planck opinion operators with the epidemic operators $K_T$, $K_V$ defined by \eqref{eq:K_def} and \eqref{eq:KVdef}, and the average background opinion $\mu(\cdot)$ will indeed vary over time depending on $\rho_I(t)$.
\end{remark}

The evolution of the first order moment $m_J(t) := m_{1,J}(t)$ is obtained from \fer{op-FPS}
\[
\begin{split}
&\dfrac{\dd m_J(t)}{\dd t} = \frac{1}{\rho_J} \dfrac{\dd}{\dd t} \int_{\mathcal I}w f_J(w,t)\dd w  \\[2mm]
&  = \frac{1}{\rho_J} \int_{\mathcal I} w \dfrac{\partial}{\partial w}(w-\mu(\rho_I))f_J(w,t)\dd w + \frac{1}{\rho_J}\dfrac{\nu_J}{2} \int_{\mathcal I} w  \dfrac{\partial^2}{\partial w^2}((1-w^2)f_J(w,t))\dd w  \\[4mm]
& = - (m_J(t) - \mu(\rho_I)),
\end{split}\]
therefore $m_J(t) \to \mu(\rho_I)$ for $t \to +\infty$ in \fer{op-FPS}. On the other hand, the evolution of the energy $E_J(t) := m_{2,J}(t)$ is given by 
\[
\begin{split}
&\dfrac{\dd E_J(t)}{\dd t} = \frac{1}{\rho_J}\dfrac{\dd}{\dd t} \int_{\mathcal I}w^2 f_J(w,t)\dd w  \\[2mm]
& = \frac{1}{\rho_J} \int_{\mathcal I} w^2 \dfrac{\partial}{\partial w}(w-\mu(\rho_I))f_J(w,t)\dd w + \frac{1}{\rho_J} \dfrac{\nu_J}{2} \int_{\mathcal I} w^2  \dfrac{\partial^2}{\partial w^2}((1-w^2)f_J(w,t))\dd w  \\[4mm]
& = - (2 + \nu_J) E_J(t) + 2 \mu(\rho_I) m_J(t) + \nu_J. 
\end{split}
\]
Hence, in the absence of fluctuations $\sigma_J^2\equiv 0$, we have that $\nu_J \equiv 0$ and $E_J(t) \to \mu^2(\rho_I)$ for $t \to +\infty$. As a consequence we can observe that, given a random variable $X \sim f_J(w,t)/\rho_J$, then
\[
\textrm{Var}[X] = \frac{1}{\rho_J} \int_{\mathcal I} (w-\mu(\rho_I))^2 f_J(w,t) \dd w \to 0, 
\]
for $t \to +\infty$.

%%%%%%%%%%%%%%%%%%%%%%%%%%%%%%%%%%%%%%%%%%%%%%%%%%%%%%%%%%%
%%%%%%%%%%%%%%%%%%%%%%%%%%%%%%%%%%%%%%%%%%%%%%%%%%%%%%%%%%%
%%%%%%%%%%%%%%%%%%    SECTION: MODELING THE LEADERS    %%%%%%%%%%%%%%%%%%%
%%%%%%%%%%%%%%%%%%%%%%%%%%%%%%%%%%%%%%%%%%%%%%%%%%%%%%%%%%%
%%%%%%%%%%%%%%%%%%%%%%%%%%%%%%%%%%%%%%%%%%%%%%%%%%%%%%%%%%%
\section{Impact of opinion leaders in kinetic compartmental modelling}
\label{sec:leader}

As far as vaccine hesitancy of a population is studied, other effects have to be considered. Among them, the relevance of a vaccination campaign. A vaccination campaign can easily be described in terms of the presence of opinion leaders. Opinion leaders can be defined as individuals that, in a binary exchange,  do not modify their opinion \cite{DMPW,DW}.
As discussed in Ref. \cite{DMPW}, in the last twenty years, new communication forms like email, web navigation, blogs and instant messaging have globally changed the way information is disseminated and opinions are formed within the society. Still, opinion leadership continues to play a critical role, independently of the underlying technology \cite{Rash}.

To characterize this type of interactions, we consider a leader--follower dynamics where agents belonging to the compartment $J \in \mathcal C$ interact with  opinion leaders. Assuming a pre-collisional opinion pair $(w, w_*)$ we obtain the post-interaction opinions $(w^{\prime \prime},w_*^{\prime \prime})$ defined  by the following anisotropic binary scheme
\begin{equation}\label{lea}
\begin{split}
w^{\prime \prime} &= w + \lambda_L G(w,w_*)(w_* - w) + D(w)\eta_L, \\
w^{\prime \prime}_* &= w_*,
\end{split}
\end{equation}
where $\lambda_L \in (0,1)$, $\langle \eta_L \rangle = \sigma_L^2$ and $G(\cdot,\cdot) \in [0,1]$ is an interaction function describing the influence of the leaders on the vaccine hesitancy of the population.  In the following, the local relevance of the diffusion modeled by $D(\cdot)$ and defined in \fer{lea} is the same as in \fer{colli-op}. We further assume that  for any $J \in \mathcal C$ we have $\lambda_J + \lambda_L= 1$. Thus, the values of $\lambda_J$ and $\lambda_L$  determine the percentage of change in the personal opinion due to interactions with the background and with the leaders, respectively.

We notice that since in \fer{lea} the contribution of the self-thinking may be different from the one present in \fer{colli-op}, the impact of self-thinking on individuals changes depending on whether they interact with the background or with the leaders. We further remark that it is reasonable to assume $w_* \approx 1$. Therefore, agents with an opinion $w \ge w_*$ interact more frequently with the leaders. A possible choice to model this behavior is given by the bounded confidence interaction function
\be\label{BC}
G(w,w_*) = \chi \left( |w - w_*| \le \Delta \right),
\ee
where $\chi(\cdot)$ denotes the indicator function while $\Delta \in [0,2]$. The ultimate goal of the leaders is thus to steer the opinion of agents toward  $w_*$ so that the vaccination rate defined by $K_V(\cdot,\cdot)$ is maximized.

Let $f_L(w)$ be the opinion density function characterizing a leader, and let $m_L = \int_{-1}^1 w f_L(w) \dd w$ denote its mean. In the presence of an epidemic spreading and a vaccination campaign, we can then model the evolution of the individuals, described by the vector-distribution function $(f_J(w,t))_{J \in \mathcal{C}}$ for the different compartments, interacting with a leader, characterized by the distribution $f_L(w)$, through Boltzmann-like equations which read in weak form, for any $J \in \mathcal{C}$, as
 \begin{equation*}
\begin{aligned}
\frac{\dd}{\dd t}\int_{\mathcal{I}}\varphi(w) f_J(w,t)\,\dd w = \frac 1{\tau} \int_{\mathcal{I}\times \mathcal I}  & \big\langle \varphi(w^\prime)-\varphi(w) \big\rangle f_J(w,t)g(z)\,\dd w \dd z\\[2mm]
& + \frac 1{\tau} \int_{\mathcal{I} \times \mathcal{I}}  \left\langle \varphi(w^{\prime \prime}) - \varphi(w) \right\rangle f_L(w_*) f_J(w,t) \,\dd w_* \dd w,
\end{aligned}
\end{equation*}
for any smooth function $\varphi(w)$, where the post-interaction opinions $w^\prime$ and $w^{\prime \prime}$ have been introduced respectively in \eqref{colli-op} and \eqref{lea}. Taking into account the quasi-invariant opinion limit described in Section \ref{sec:opinion} and assuming that $\lambda_J + \lambda_L = 1$, the vector-distribution function $(f_J(w,t))_{J \in \mathcal{C}}$ is then driven by the system of Fokker--Planck equations
\begin{equations}\label{op-FPH}
 \frac{\partial f_J(w,t)}{\partial t} & =  Q_J(f_J, f_L)(w,t) \\[4mm]
 & = \frac{\sigma_J^2 + \sigma_L^2}{2}\frac{\partial^2 }{\partial w^2}\left((1-w^2)f_J(w,t)\right)  \\[2mm]
 &  \hspace{2cm} + \frac{\partial }{\partial w}\left( \left( \lambda_J(w - \mu(\rho_I)) + \lambda_L \mathcal{L}[f_L](w,t) \right) f_J(w,t) \right),
 \end{equations}
where the linear operator $\mathcal{L}$ accounts for the interactions with the leader through the function $G$ and has the explicit form
\begin{equation} \label{op-leader}
\mathcal{L}[f_L](w) = \int_{\mathcal I}G(w,w_*) (w - w_*) f_L(w_*) \dd w_*,
\end{equation}
and the operators $Q_J$, $J \in \mathcal{C}$, satisfy, for any $t > 0$, the additional no-flux boundary conditions
\begin{equation} \label{op-FPH-BC}
\begin{aligned}
\left( \lambda_J(w - \mu(\rho_I)) + \lambda_L \mathcal{L}[f_L](w) \right) f_J(w,t) + \frac{\sigma_J^2 + \sigma_L^2}{2} \frac{\partial}{\partial w}\left((1-w^2) f_J(w,t)\right) \Big|_{w = \pm 1} = 0, & \\[4mm]
(1-w^2) f_J(w,t) \Big|_{w = \pm 1} = 0. &
\end{aligned}
\end{equation}
Therefore, in the general case, the drift term of the Fokker--Planck-type equations takes into account both contributions coming from the opinion changes of agents in the presence of the epidemic and the influence of leaders that promote protective behaviours, maximizing the vaccination rate. Note that, due to the nonlinear nature of the operator $\mathcal L[f_L]$  the explicit form of the equilibrium distribution is  difficult to obtain. However, assuming that the leaders' population is able to interact with the whole population, i.e. $\Delta = 2$ in \fer{BC}, we have  $G(\cdot,\cdot)\equiv 1$. In this case, owing to the constraint $\lambda_J + \lambda_L = 1$, the equilibrium has the explicit form
\be\label{pc}
f_J^{\infty}(w) = C_J (1 - w)^{-1 + \frac{1 - (\lambda_J \mu(\rho_I) + \lambda_L m_L)}{\sigma^2}} (1 + w)^{-1 + \frac{1+ (\lambda_J \mu(\rho_I) +\lambda_L m_L)}{\sigma^2}},
 \ee
where $\sigma^2 = \sigma_J^2 + \sigma_L^2$ and the constant $C_J = C_J(\mu(\rho_I), m_L, \lambda_J, \lambda_L, \sigma^2)>0$ is a normalization constant. The evolution of the first moment can be obtained from \fer{op-FPH} by integration
\[
\begin{split}
\dfrac{\dd m_J(t)}{\dd t} & = \frac{1}{\rho_J} \int_{\mathcal I} w \dfrac{\partial}{\partial w} \left( \lambda_J(w-\mu(\rho_I)) + \lambda_L(w-m_L)\right)f_J(w,t)\dd w \\
& \hspace{3.5cm} + \frac{1}{\rho_J}  \dfrac{\sigma^2}{2} \int_{\mathcal I} w \dfrac{\partial^2}{\partial w^2}((1-w^2)f_J(w,t))\dd w \\
& = - \frac{1}{\rho_J} \int_{\mathcal I}(w-(\lambda_J\mu(\rho_I) + \lambda_L m_L))f_J(w,t)\dd w,
\end{split}
\]
from which we get $m_J(t) \to \lambda_J \mu(\rho_I) + \lambda_L m_L$ for $t \to +\infty$. Furthermore, the evolution of the energy is given by 
\[
\begin{split}
\dfrac{\dd E_J(t)}{\dd t} & = \frac{1}{\rho_J} \int_{\mathcal I}w^2 \dfrac{\partial}{\partial w} \left( \lambda_J(w-\mu(\rho_I)) + \lambda_L(w-m_L)\right)f_J(w,t)\dd w \\
& \hspace{3.5cm} + \frac{1}{\rho_J} \dfrac{\sigma^2}{2} \int_{\mathcal I}w^2 \dfrac{\partial^2}{\partial w^2}((1-w^2)f_J(w,t))\dd w \\[2mm]
& = - (2 + \sigma^2) E_J(t) +  2 (\lambda_J\mu(\rho_I) + \lambda_L m_L) m_J(t) + \sigma^2,
\end{split}
\]
Hence, if $\sigma^2 \equiv0$ we have that a random variable $X\sim f_J(w,t)/\rho_J$ is such that $\textrm{Var}[X]\to 0$ for $t\to+ \infty$.

%%%%%%%%%%%%%%%%%%    SUBSECTION: FULL KINETIC MODEL    %%%%%%%%%%%%%%%%%%%
\subsection{The full opinion-based kinetic model}\label{sec:leaders}

We are now ready to couple the epidemic dynamics with the opinion formation model. The full kinetic epidemic system reads
\begin{equation}
\label{eq:kinetic_hesitancy}
\begin{split}
\frac{\partial f_S}{\partial t} & = -K_T(f_S,f_I) - K_V(f_S) + \dfrac{1}{\tau} Q_S(f_S,f_L), \\[2mm]
\frac{\partial f_E}{\partial t} & = K_T(f_S,f_I) - \sigma_E f_E + \dfrac{1}{\tau} Q_E(f_E,f_L), \\[2mm]
\frac{\partial f_I}{\partial t} & = \sigma_E f_E - \gamma_I f_I + \dfrac{1}{\tau} Q_I(f_I,f_L), \\[2mm]
\frac{\partial f_R}{\partial t} & = \gamma_I f_I + \dfrac{1}{\tau} Q_R(f_R,f_L), \\[2mm]
\frac{\partial f_V}{\partial t} & =  K_V(f_S) + \dfrac{1}{\tau} Q_V(f_V,f_L),
\end{split}
\end{equation}
where $\tau > 0$ is a relaxation parameter, and the Fokker--Planck operators describing the leader--follower dynamics $Q_J(\cdot,\cdot)$, $J\in \mathcal C = \{ S, E, I, R, V \}$, have been defined by \eqref{op-FPH}--\eqref{op-FPH-BC}. As in system \eqref{eq:kinetic_opinion},  the parameter $\sigma_E>0$ is such that $1/\sigma_E$ measures the mean latent period for the disease, whereas $\gamma_I>0$ is such that $1/\gamma_I>0$ measures the mean infectious period \cite{DH}. In \eqref{eq:kinetic_hesitancy}, the transmission of the infection is governed by the local incidence rate $K_T(f_S,f_I)$, defined in \fer{eq:K_def}, while the \emph{response} of the susceptible agents to the infection through vaccination is modeled by the operator $K_V(f_S)$, defined in \fer{eq:KVdef}.

%%%%%%%%%%%%%%%%%    SUBSECTION: MATHEMATICAL ANALYSIS    %%%%%%%%%%%%%%%%%%
\subsection{Properties of the model}\label{sec:analysis}

This section is dedicated to the investigation of the main properties of the kinetic epidemic model \fer{eq:kinetic_hesitancy}, where the Fokker--Planck operators $Q_J$ have been defined in \fer{op-FPH}. In the following, we will concentrate on the case where  $G\equiv 1$, so that the operators $Q_J$, $J \in \mathcal{C} = \{ S, E, I, R, V \}$, are given by
\begin{equations}\label{op-FP-simple}
Q_J(f_J, f_L)(w,t) = \frac{\sigma^2}{2}\frac{\partial^2 }{\partial w^2}\left((1-w^2)
 f_J(w,t)\right) + \frac{\partial }{\partial w}\left( \left( w -m(t)\right) f_J(w,t) \right),
 \end{equations}
where $m(t) = \lambda_J \mu(\rho_I) + \lambda_L m_L$. We complement the above equation with no-flux boundary conditions.  The local equilibrium distributions of \fer{op-FP-simple} for $J\in \mathcal C$ are now given by 
\[
f^{\mathrm{eq}}_J(w,t) = C_J(t) (1 - w)^{-1 + \frac{1 - m(t)}{\sigma^2}} (1 + w)^{-1 + \frac{1+ m(t)}{\sigma^2}},
\] 
with $C_J(t) > 0$ suitably chosen to normalize the distribution. This information is needed in particular to study the evolution of \eqref{eq:kinetic_hesitancy}.

%%%%%%%%%%%%%%%%%%%%%%%    NONNEGATIVITY    %%%%%%%%%%%%%%%%%%%%%%%%
\medskip
\noindent \textbf{Nonnegativity of solutions.} We begin by showing that solutions to \eqref{eq:kinetic_hesitancy} starting from a nonnegative initial datum remains nonnegative over time. To this end, we introduce a splitting strategy based on a the evaluation of the epidemic dynamics and on the opinion formation dynamics whose positivity preserving features are treated separately. We will follow ideas from Refs. \cite{CRS,ELL,FMZ}. In particular, in order to deal with $Q_J$ we need a preliminary result (which will be also useful to recover the uniqueness of solutions) concerning an $L^1$ comparison principle between solutions of the associated Fokker-Planck equations. It is the purpose of the following:

\begin{lemma} \label{lemma:L1-comparison}
Let $f_J$ be a solution of the Cauchy problem
\begin{equation} \label{eq:Cauchy-FP}
\begin{aligned}
& \frac{\partial}{\partial t} f_J(w,t) = \frac{\sigma^2}{2} \frac{\partial^2}{\partial w^2} \big( (1-w^2) f_J(w,t) \big) + \frac{\partial}{\partial w} \big( (w - m(t)) f_J(w,t) \big), \\[2mm]
& f_J(w,0) = f_J^{\mathrm{in}}(w),
\end{aligned}
\end{equation}
posed on $\mathcal{I} \times \R_+$, with the Fokker--Planck operator satisfying no-flux  boundary conditions. If $f_J^{\mathrm{in}} \in L^1(\mathcal{I})$, the $L^1$-norm of $f_J$ is non-increasing for $t \geq 0$.
\end{lemma}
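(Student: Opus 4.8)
\emph{Proof strategy.} Since the Cauchy problem \eqref{eq:Cauchy-FP} is linear, the plan is first to establish the statement for an arbitrary, possibly sign-changing, solution $f_J$; the full $L^1$-comparison principle between two solutions $f_J^{(1)},f_J^{(2)}$ (and hence uniqueness) then follows by applying it to the difference $f_J^{(1)}-f_J^{(2)}$. I would start by rewriting the equation in conservation form $\partial_t f_J=\partial_w\mathcal J[f_J]$, where a short computation using $\partial_w((1-w^2)f_J)=(1-w^2)\partial_w f_J-2wf_J$ gives the flux
\[
\mathcal J[f_J](w,t)=\frac{\sigma^2}{2}(1-w^2)\,\partial_w f_J(w,t)+\big((1-\sigma^2)w-m(t)\big)f_J(w,t),
\]
and the no-flux boundary conditions become exactly $\mathcal J[f_J](\pm1,t)=0$ for all $t>0$.

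\emph{Kato-type estimate.} Fix $\delta>0$ and let $\sign_\delta(r)=r/\sqrt{r^2+\delta^2}$, a smooth, odd, increasing regularisation of the sign with $\sign_\delta'\ge0$. Denote by $\Phi_\delta(r)=\sqrt{r^2+\delta^2}-\delta$ its primitive vanishing at $0$ (so $|r|-\delta\le\Phi_\delta(r)\le|r|$) and by $\Psi_\delta(r)=\delta\big(1-\delta/\sqrt{r^2+\delta^2}\big)$ the primitive of $r\mapsto r\,\sign_\delta'(r)$ vanishing at $0$ (so $0\le\Psi_\delta\le\delta$). Multiplying \eqref{eq:Cauchy-FP} by $\sign_\delta(f_J)$, integrating over $\mathcal I$ and integrating by parts --- discarding the boundary term $[\sign_\delta(f_J)\,\mathcal J[f_J]]_{-1}^{1}$ thanks to the no-flux condition and using $\sign_\delta'(f_J)f_J\,\partial_w f_J=\partial_w\big(\Psi_\delta(f_J)\big)$ --- I would obtain
\[
\frac{\dd}{\dd t}\int_{\mathcal I}\Phi_\delta(f_J)\,\dd w=-\frac{\sigma^2}{2}\int_{\mathcal I}(1-w^2)\,\sign_\delta'(f_J)\,(\partial_w f_J)^2\,\dd w+R_\delta(t),
\]
with $R_\delta(t)=-\int_{\mathcal I}\big((1-\sigma^2)w-m(t)\big)\,\partial_w\big(\Psi_\delta(f_J)\big)\,\dd w$. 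The first term on the right is nonpositive because $1-w^2\ge0$ on $\mathcal I$ and $\sign_\delta'\ge0$. Integrating by parts once more, $R_\delta(t)=-\big[((1-\sigma^2)w-m)\Psi_\delta(f_J)\big]_{-1}^{1}+(1-\sigma^2)\int_{\mathcal I}\Psi_\delta(f_J)\,\dd w$, which, recalling $m(t)\in\mathcal I$ and $0\le\Psi_\delta\le\delta$, satisfies $|R_\delta(t)|\le C\delta$ with $C$ depending only on $\sigma^2$. Hence $\tfrac{\dd}{\dd t}\int_{\mathcal I}\Phi_\delta(f_J(t))\,\dd w\le C\delta$ for every $t$.

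\emph{Passage to the limit.} Integrating this bound over $[t_1,t_2]$ with $0\le t_1\le t_2$ and using $\Phi_\delta(f_J(t_2))\ge|f_J(t_2)|-\delta$ together with $\Phi_\delta(f_J(t_1))\le|f_J(t_1)|$ yields $\|f_J(t_2)\|_{L^1(\mathcal I)}\le\|f_J(t_1)\|_{L^1(\mathcal I)}+\delta\big(|\mathcal I|+C(t_2-t_1)\big)$; letting $\delta\to0$ proves that $t\mapsto\|f_J(t)\|_{L^1(\mathcal I)}$ is non-increasing, and in particular bounded by $\|f_J^{\mathrm{in}}\|_{L^1(\mathcal I)}$. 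Applied to differences of solutions this gives the announced $L^1$-comparison and uniqueness, and combined with the (immediate) mass conservation of \eqref{eq:Cauchy-FP} it also yields propagation of nonnegativity when $f_J^{\mathrm{in}}\ge0$.

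\emph{Main obstacle.} The formal computation above is routine; the delicate point is its rigorous justification, since the diffusion coefficient $\tfrac{\sigma^2}{2}(1-w^2)$ degenerates at $w=\pm1$. One must know that the solution is regular enough on compact subsets of $(-1,1)$ for the multiplication by $\sign_\delta(f_J)$ and the two integrations by parts to be licit, and that the boundary contributions truly vanish (or are $O(\delta)$) in the limit --- which is precisely what the conditions $\mathcal J[f_J](\pm1,t)=0$ and $(1-w^2)f_J|_{w=\pm1}=0$ encode. I would make this rigorous either by carrying out the estimate first on $[-1+\eta,1-\eta]$ and letting $\eta\to0$, controlling the flux at $w=\pm(1-\eta)$ via the boundary conditions, or by appealing to the well-posedness and regularity theory available for this class of degenerate Fokker--Planck equations \cite{CRS,ELL,FMZ}, which also furnishes the uniform-in-time integrability needed to exchange $\dd/\dd t$ with the spatial integral.
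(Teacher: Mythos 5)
Your proof is correct and follows essentially the same route as the paper's: multiply the equation by a regularized sign function, integrate by parts using the no-flux conditions, observe that the degenerate diffusion term contributes a nonpositive quantity, and show that the drift contribution --- rewritten through the primitive of $r\,\sign_\delta'(r)$, which is exactly $\sign_\delta(f_J)f_J-\Phi_\delta(f_J)$, the same quantity the paper uses --- vanishes in the regularization limit. Your explicit $\mathcal{O}(\delta)$ control of the remainder (including the boundary term produced by the second integration by parts) is in fact a slightly more quantitative rendering of the paper's qualitative limiting argument.
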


\begin{proof}
Let us consider, for some parameter $\e > 0$, a regularized increasing approximation of the sign function $\sign_\e(x)$, $x \in \R$, and introduce a regularization $|f_J|_\e(w)$ of $|f_J|(w)$ as the primitive of $\sign_\e(f_J)(w)$, for any $w \in \mathcal{I}$. We can now multiply the Fokker-Planck equation by $\sign_\e(f_J)$ on both sides, integrate in $w$ and integrate by parts using no-flux boundary conditions to obtain
\begin{equation} \label{eq:initial-comparison}
\begin{aligned}
\frac{\dd}{\dd t} & \int_{-1}^1 |f_J|_\e (w,t) \dd w = - \int_{-1}^1 \sign_\e^\prime (f_J)(w,t) \big[ \partial_w f_J(w,t) \big]^2 D(w) \dd w \\[2mm]
& - \int_{-1}^1 \sign_\e^\prime(f_J)(w,t) f_J(w,t) \partial_w f_J(w,t) (w - m(t) + \partial_w D(w)) \dd w,
\end{aligned}
\end{equation}
where we have used the notation $D(w) = \frac{\sigma^2}{2}(1 - w^2) \geq 0$. Now, since
\begin{equation*}
\sign_\e^\prime(f_J) f_J \partial_w f_J = \partial_w \big( \sign_\e(f_J) f_J - |f_J|_\e \big),
\end{equation*}
one can use this information and further integrate by parts the second integral in \eqref{eq:initial-comparison}, to conclude that it vanishes in the limit $\e \to 0$ as $\sign_\e(f_J)(w) f_J(w) \to |f_J|_\e(w)$ for a.e. $w \in \mathcal{I}$. Because the integrand of the first term in \eqref{eq:initial-comparison} is nonnegative, we finally get the claim
\begin{equation*}
\frac{\dd}{\dd t} \|f_J(t)\|_{L^1(\mathcal{I})} \leq 0.
\end{equation*}

\end{proof}
From this lemma follows the nonnegativity of solutions for the Cauchy problem where only the Fokker--Planck operators are taken into account, as explained in the proof of the next result.

\begin{proposition}[Nonnegativity]
Let $(f_J)_{J \in \mathcal{C}}$ be a solution of the kinetic system \eqref{eq:kinetic_hesitancy} with $Q_J$ given by \eqref{op-FP-simple}, and let $f_J^{\mathrm{in}}(w) \in L^1(\mathcal{I})$ for any $J \in \mathcal{C}$. Then, for any $J \in \mathcal{C}$, if initially $f_J(w,0) = f_J^{\mathrm{in}}(w) \geq 0$ for any $w \in \mathcal{I}$, it holds that $f_J(w,t) \geq 0$ for a.e. $w \in \mathcal{I}$ and for any $t > 0$.
\end{proposition}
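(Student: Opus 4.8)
The plan is to prove nonnegativity via an operator-splitting argument, exactly as anticipated in the text: on a short time interval $[t_k,t_{k+1}]$ of length $\Delta t$ one alternates (i) the purely \emph{epidemic} flow, obtained by dropping the Fokker--Planck terms $\frac{1}{\tau}Q_J$ from \eqref{eq:kinetic_hesitancy}, and (ii) the purely \emph{opinion} flow $\partial_t f_J=\frac{1}{\tau}Q_J(f_J,f_L)$ with $Q_J$ as in \eqref{op-FP-simple} and no-flux boundary conditions. If each of these two sub-flows maps nonnegative data to nonnegative data, then so does their composition, hence so does the corresponding Lie--Trotter approximation of \eqref{eq:kinetic_hesitancy}; letting $\Delta t\to0$ and using that $L^1$ (equivalently, pointwise a.e.) limits preserve the sign, the solution of the full system inherits nonnegativity. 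It thus suffices to treat the two sub-flows separately, following the ideas of \cite{CRS,ELL,FMZ}.

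For the epidemic sub-flow, fix $w\in\mathcal{I}$ and read the subsystem as a (nonlocal) ODE system for the vector $(f_S,f_E,f_I,f_R,f_V)(w,\cdot)$. Because the kernels $\kappa_T$ in \eqref{eq:kappa_def} and $\kappa_V$ in \eqref{eq:kappav_def} are nonnegative and bounded on $\mathcal{I}$, every loss term has the form $-(\mathrm{nonnegative})\, f_J$, and every source term is an integral or a product of the feeding densities: $K_T(f_S,f_I)$ feeds $f_E$, $\sigma_E f_E$ feeds $f_I$, $\gamma_I f_I$ feeds $f_R$, and $K_V(f_S)$ feeds $f_V$, while $f_S$ solves the linear equation $\partial_t f_S=-\big(\int_{\mathcal{I}}\kappa_T(w,w_*)f_I(w_*,t)\,\dd w_*+\kappa_V(w)\big)f_S$ and is therefore given explicitly by $f_S^{\mathrm{in}}(w)\,\exp\!\big(-\int_0^t(\cdots)\,\dd s\big)\ge0$ as long as $f_I\ge0$. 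Consequently the sub-tangent (quasi-positivity) condition holds---whenever all components are $\ge0$ and $f_J=0$, the corresponding right-hand side is $\ge0$---so the positive cone is invariant for the epidemic sub-flow; equivalently, the solution can be built by a Duhamel/Picard iteration whose iterates are all nonnegative by induction.

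For the opinion sub-flow the decisive input is Lemma \ref{lemma:L1-comparison}. The Fokker--Planck equation \eqref{op-FP-simple} with no-flux boundary conditions is mass-preserving, so $\int_{\mathcal{I}}f_J(w,t)\,\dd w=\int_{\mathcal{I}}f_J^{\mathrm{in}}(w)\,\dd w$ for all $t\ge0$ (and during this sub-step $\rho_I$ is frozen, so $m(t)$ is in fact constant; in any case the lemma already allows $m(t)$ to depend on time). By the lemma, $\|f_J(t)\|_{L^1(\mathcal{I})}\le\|f_J^{\mathrm{in}}\|_{L^1(\mathcal{I})}=\int_{\mathcal{I}}f_J^{\mathrm{in}}$, where we used $f_J^{\mathrm{in}}\ge0$. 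Combining this with the trivial bound $\|f_J(t)\|_{L^1(\mathcal{I})}\ge\int_{\mathcal{I}}f_J(w,t)\,\dd w=\int_{\mathcal{I}}f_J^{\mathrm{in}}$ forces equality, i.e. $\int_{\mathcal{I}}\big(|f_J(w,t)|-f_J(w,t)\big)\,\dd w=0$; since the integrand equals $2f_J^{-}(\cdot,t)\ge0$ it must vanish a.e., whence $f_J(\cdot,t)\ge0$ a.e. on $\mathcal{I}$.

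I expect the main obstacle to be the rigorous passage to the limit $\Delta t\to0$ in the splitting---or, if one prefers to bypass it, a direct proof that the positive cone is invariant for the fully coupled system, which is delicate because of the mixed degenerate-parabolic/reaction structure together with the nonlocal incidence term: a classical maximum principle does not apply directly to the no-flux Fokker--Planck operator, which is precisely why nonnegativity of the opinion sub-flow is obtained indirectly, through the $L^1$ contraction of Lemma \ref{lemma:L1-comparison} combined with mass conservation. The epidemic sub-flow, by contrast, is routine once the nonnegativity of all kernels and source terms has been noted.
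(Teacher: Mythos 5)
Your proposal is correct and follows essentially the same route as the paper: a splitting into the epidemic sub-flow (handled through the nonnegativity of $\kappa_T,\kappa_V$ and explicit Duhamel/quasi-positivity arguments, which is the content of the paper's contradiction argument for $f_E$ together with the integrated formulas for $f_S,f_I,f_R,f_V$) and the Fokker--Planck sub-flow (handled via Lemma \ref{lemma:L1-comparison} combined with mass conservation, exactly as in the paper, where your ``$L^1$-norm equals mass forces the negative part to vanish'' is just a streamlined version of the paper's estimate on the regularized negative part $f_{J,\e}^{-}$). The only point you flag beyond the paper --- the rigorous passage $\Delta t\to 0$ in the splitting --- is likewise left implicit in the paper's own proof, so no essential discrepancy remains.
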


\begin{proof}
We start with the former ones and show that for any $J \in \mathcal{C}$ the solution of the Cauchy problem \eqref{eq:Cauchy-FP} remains nonnegative. This is done by defining the negative part of the solution $f_J$ via the regularization introduced above, as
\begin{equation*}
f_{J, \e}^{-} (w,t) = \frac{|f_J|_\e(w,t) - f_J(w,t)}{2}, \qquad w \in \mathcal{I}, \ t > 0,
\end{equation*}
and investigating its time evolution. Since the Fokker--Planck operators $Q_J$ are mass-preserving, thanks to Lemma \ref{lemma:L1-comparison} we deduce that, for any $t > 0$,
\begin{equation*}
\int_{-1}^1 f_{J, \e}^{-}(w,t) \dd w \leq \int_{-1}^1 f_{J, \e}^{\mathrm{in} -}(w) \dd w + \mathcal{O}(\e),
\end{equation*}
and the nonnegativity follows by taking the limit $\e \to 0$. To prove that also the remaining epidemic operators preserve the nonnegativity of solutions, we then study the ODE system
\begin{equation} \label{eq:Cauchy-ODE}
\begin{split}
\frac{\partial f_S(w,t)}{\partial t} &= - f_S(w,t) \left(\kappa_V(w) +  \int_{-1}^1  \kappa_T(w,w_*) f_I(w_*,t) \dd w_*\right)  , \\[2mm]
\frac{\partial f_E(w,t)}{\partial t} &= f_S(w,t) \int_{-1}^1 \kappa_T(w,w_*) f_I(w_*,t) \dd w_* - \sigma_E f_E(w, t), \\[2mm]
\frac{\partial f_I(w,t)}{\partial t} \ &= \sigma_E f_E(w,t) - \gamma_I f_I(w,t), \\[4mm]
\frac{\partial f_R(w,t)}{\partial t} &= \gamma_I f_I(w,t), \\[4mm]
\frac{\partial f_V(w,t)}{\partial t} &= f_S(w,t) \kappa_V(w)
\end{split}
\end{equation}
where we recall that $\kappa_T, \kappa_V \geq 0$ for any $w, w_* \in \mathcal{I}$. The idea is to proceed by contradiction, supposing that there exist a time $t_0 > 0$ and an opinion $w_0 = w_0(t_0) \in \mathcal{I}$ such that, for any $t \in [0, t_0)$ and any $w \in \mathcal{I}$,
\begin{equation*}
f_E(w, t) > 0, \qquad f_E(w_0, t_0) = 0, \qquad \frac{\partial f_E(w_0, t_0)}{\partial t} < 0.
\end{equation*}
But now we can use the first and third equations of system \eqref{eq:Cauchy-ODE} to get explicit expressions for $f_S$ and $f_I$, which write, for any $t \in [0, t_0]$ and any $w \in \mathcal{I}$,
\begin{equation} \label{eq:integration}
\begin{aligned}
f_S(w, t) & = f_S^{\mathrm{in}}(w) e^{- \int_0^t \left( \int_{-1}^1   \kappa_T(w,w_*) f_I(w_*, s) \dd w_* + \kappa_V(w) \right) \dd s} \geq 0, \\ 
f_I(w, t) & = f_I^{\mathrm{in}}(w) e^{- \gamma_I t} + \sigma_E \int_0^t f_E(w, s) e^{- \gamma_I (t - s)} \dd s \geq 0,
\end{aligned}
\end{equation}
since, by hypothesis, $f^{\mathrm{in}}_S(w),\ f_I^{\mathrm{in}}(w) \geq 0$ for $w \in \mathcal{I}$. This implies in particular, from the second equation in \eqref{eq:Cauchy-ODE}, that
\begin{equation*}
\frac{\partial f_E(w_0, t_0)}{\partial t} = f_S(w_0, t_0) \int_{-1}^1 \kappa_T(w, w_*) f_I(w_0, t_0) \dd w_* \geq 0,
\end{equation*}
which contradicts our hypothesis, so that indeed $f_E(t, w) \geq 0$ for any $t > 0$ and any $w \in \mathcal{I}$. Then also $f_J(t, w) \geq 0$ for any $J \in \{ S, I, R, V \}$, $t > 0$ and any $w \in \mathcal{I}$, thanks again to \eqref{eq:integration} and by also integrating the evolution equations for $f_R$ and $f_V$ in \eqref{eq:Cauchy-ODE}. This ends our proof.

\end{proof}

%%%%%%%%%%%%%%%%%%%%%%%%    REGULARITY    %%%%%%%%%%%%%%%%%%%%%%%%%%
\noindent \textbf{Regularity of solutions.} Another important aspect is to understand whether the solutions to our model preserve some sort of regularity of the initial conditions. We first observe that the $L^1$ regularity easily follows from the nonnegativity result just proved in the previous part and by the conservation of mass which is intrinsic to the model. Indeed, let us suppose that $f_J^{\mathrm{in}} \in L^1(\mathcal{I})$ for any $J \in \mathcal{C}$. Since for any $J \in \mathcal{C}$, $f_J(w,t) \geq 0$ for a.e. $w \in \mathcal{I}$ and any $t > 0$, we deduce that $\| f_J \|_{L^1(\mathcal{I})} = \int_{-1}^1 f_J(w,t) \dd w$ for any $t > 0$. Then, by summing the four equations of the kinetic system \eqref{eq:kinetic_hesitancy} with $Q_J$ given by \eqref{op-FP-simple}, and by integrating in $w$ over $\mathcal{I}$, we infer that for any $t > 0$
\begin{equation*}
\sum_{J \in \mathcal{C}} \| f_J \|_{L^1(\mathcal{I})} = \sum_{J \in \mathcal{C}} \| f_J^{\mathrm{in}} \|_{L^1(\mathcal{I})},
\end{equation*}
since the Fokker--Planck operators $Q_J$ are mass-preserving. Therefore, $f_J \in L^1(\mathcal{I})$, $J \in \mathcal{C}$, for any $t > 0$.

In order to obtain more regularity, we essentially need to ensure that we can perform suitable integrations by parts when dealing with the operators $Q_J$, while it is clear that the operators $K_T$, $K_V$ can be dealt with easily thanks to the $L^1$ regularity. In particular, the main issue comes from the fact that the diffusion part of the Fokker--Planck operators does not have a self-adjoint structure. In order to solve the problem, we then rewrite $Q_J$, for any $J \in \mathcal{C}$, as
\begin{equation*}
Q_J(f_J,f_L)(w,t) = \frac{\sigma^2}{2} \frac{\partial}{\partial w} \left( (1 - w^2) \frac{\partial}{\partial w} f_J(w,t) \right) + \frac{\partial}{\partial w} \Big( \big((1- \sigma^2) w - m(t) \big) f_J(w,t) \Big),
\end{equation*}
with an augmented drift term, and now the diffusion part has a more manageable form. However, this choice comes with the tradeoff that additional boundary conditions need to be imposed, as shown in the following result.

\begin{proposition}[Regularity]
Let $p \in [2, +\infty)$ and consider a solution $(f_J)_{J \in \mathcal{C}}$ of the kinetic system \eqref{eq:kinetic_hesitancy} where the operators $Q_J$ are given by \eqref{op-FP-simple}, while the operators $K_T$ and $K_V$ are defined by \fer{eq:K_def} and \fer{eq:KVdef}  respectively, with $\kappa_T \in L^\infty(\mathcal{I} \times \mathcal{I})$ and $\kappa_V \in L^\infty(\mathcal{I})$. We further assume that, for any $J \in \mathcal{C}$, $Q_J$ satisfy the additional no-flux boundary conditions
\begin{equation} \label{regularity-BC}
\begin{aligned}
\big((1- \sigma^2) w - m(t) \big) f_J(w,t) \Big|_{w = \pm 1} = 0, & \\[4mm]
(1-w^2) \frac{\partial}{\partial w} f_J(w,t) \Big|_{w = \pm 1} = 0. &
\end{aligned}
\end{equation}
for any $t > 0$. Then, if initially $f_J^{\mathrm{in}} \in L^p(\mathcal{I})$ for any $J \in \mathcal{C}$, the regularity is preserved over time, i.e. $f_J \in L^p(\mathcal{I})$, $J \in \mathcal{C}$, for any $t > 0$.
\end{proposition}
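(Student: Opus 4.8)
The plan is to derive an a priori differential inequality for $\sum_{J\in\mathcal C}\|f_J(t)\|_{L^p(\mathcal I)}^p$ and close it by Grönwall's lemma. Since the previous Proposition already guarantees $f_J\geq 0$, the natural move is to test the $J$-th equation of \eqref{eq:kinetic_hesitancy} against $f_J^{p-1}$ and integrate over $\mathcal I$, so that $\frac{\dd}{\dd t}\|f_J\|_{L^p}^p=p\int_{\mathcal I}f_J^{p-1}\,\partial_t f_J\,\dd w$, whose right-hand side splits into the Fokker--Planck contribution $\frac1\tau\int_{\mathcal I}f_J^{p-1}Q_J(f_J,f_L)\,\dd w$ and the epidemic contribution coming from $K_T$, $K_V$ and the linear transfer terms with rates $\sigma_E,\gamma_I$. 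The target is a bound of the form $\frac{\dd}{\dd t}\sum_J\|f_J\|_{L^p}^p\leq C\sum_J\|f_J\|_{L^p}^p$.

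The epidemic terms are the easy part. The loss terms ($-K_T(f_S,f_I)-K_V(f_S)$ in the $S$-equation and $-\sigma_E f_E$, $-\gamma_I f_I$) contribute nonpositive quantities, so they may simply be discarded. Each gain term is pointwise controlled by a bounded multiple of another density: using $\kappa_T\in L^\infty$, $\kappa_V\in L^\infty$ and the mass bound $\rho_I(t)\leq\sum_J\rho_J(t)=\sum_J\rho_J^{\mathrm{in}}$ (total mass is conserved, as noted just before the statement), one has for instance $K_T(f_S,f_I)(w)\leq\|\kappa_T\|_{L^\infty}\big(\sum_J\rho_J^{\mathrm{in}}\big)f_S(w)$, and then Hölder's inequality with exponents $\tfrac{p}{p-1},p$ followed by Young's inequality gives $\int_{\mathcal I}f_E^{p-1}K_T(f_S,f_I)\,\dd w\leq C\big(\|f_E\|_{L^p}^p+\|f_S\|_{L^p}^p\big)$, and analogously for the $\sigma_E f_E$, $\gamma_I f_I$ and $K_V(f_S)$ gain terms. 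Summing over $J$, the epidemic part contributes at most $C\sum_J\|f_J\|_{L^p}^p$ with $C$ depending only on $p$, $\tau$, $\|\kappa_T\|_{L^\infty}$, $\|\kappa_V\|_{L^\infty}$, $\sigma_E$, $\gamma_I$ and the initial mass.

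The Fokker--Planck part is where the reformulation $Q_J=\frac{\sigma^2}{2}\partial_w\big((1-w^2)\partial_w f_J\big)+\partial_w\big(((1-\sigma^2)w-m(t))f_J\big)$ together with the extra boundary conditions \eqref{regularity-BC} does the work. Integrating the diffusion term by parts once, the boundary contribution $\frac{\sigma^2}{2}f_J^{p-1}(1-w^2)\partial_w f_J\big|_{w=\pm1}$ vanishes by the second condition in \eqref{regularity-BC}, leaving the dissipative quantity $-\frac{\sigma^2(p-1)}{2}\int_{\mathcal I}f_J^{p-2}(\partial_w f_J)^2(1-w^2)\,\dd w=-\frac{2\sigma^2(p-1)}{p^2}\int_{\mathcal I}(1-w^2)\big(\partial_w f_J^{p/2}\big)^2\,\dd w\leq 0$, where the hypothesis $p\geq 2$ is exactly what makes the weight $f_J^{p-2}$ harmless. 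For the augmented drift term, one integrates by parts, writes $f_J^{p-1}\partial_w f_J=\tfrac1p\partial_w f_J^p$, and integrates by parts a second time; the boundary terms $f_J^p\big((1-\sigma^2)w-m(t)\big)\big|_{w=\pm1}$ vanish by the first condition in \eqref{regularity-BC}, and what remains is $\frac{(p-1)(1-\sigma^2)}{p}\|f_J\|_{L^p}^p$, bounded in absolute value by $C_p\|f_J\|_{L^p}^p$ (note $|m(t)|=|\lambda_J\mu(\rho_I)+\lambda_L m_L|\leq 1$, so no smallness of $\sigma^2$ is needed). Dropping the nonpositive diffusion term and combining with the epidemic estimate gives $\frac{\dd}{\dd t}\sum_J\|f_J\|_{L^p}^p\leq C\sum_J\|f_J\|_{L^p}^p$, and Grönwall yields $\sum_J\|f_J(t)\|_{L^p}^p\leq e^{Ct}\sum_J\|f_J^{\mathrm{in}}\|_{L^p}^p<\infty$ for every $t>0$, which is the claim.

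The main obstacle is not the estimate itself but its rigorous justification: $f_J^{p-1}$ is an admissible test function only if the solution is regular enough for the integrations by parts and for $\frac{\dd}{\dd t}\int_{\mathcal I}f_J^p\,\dd w$ to make sense. As in Lemma \ref{lemma:L1-comparison}, I would handle this by replacing $f_J^{p-1}$ with a bounded smooth approximation (built, e.g., from $\min(f_J,M)$ or from the mollified primitives used there), carrying out the computation for the regularized quantity — all the boundary terms still vanish thanks to \eqref{regularity-BC} — and then passing to the limit using monotone/dominated convergence together with the nonnegativity and $L^1$ bounds already established. The non-self-adjoint form of the original diffusion operator, which would otherwise block the first integration by parts, is precisely what is circumvented by the rewriting with augmented drift and the corresponding boundary conditions \eqref{regularity-BC}.
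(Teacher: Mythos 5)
Your proposal is correct and follows essentially the same route as the paper: test with $f_J^{p-1}$, bound the epidemic gain terms via H\"older and Young, kill the diffusion term through the second condition in \eqref{regularity-BC}, and close with Gr\"onwall. Your handling of the augmented drift (writing $f_J^{p-1}\partial_w f_J=\tfrac1p\partial_w f_J^p$ and integrating by parts twice) is algebraically equivalent to the paper's splitting $\mathcal{T}_2=\tfrac1p\mathcal{T}_2+\tfrac{p-1}{p}\mathcal{T}_2$ and yields the same contribution $\tfrac{(p-1)(1-\sigma^2)}{p}\|f_J\|_{L^p(\mathcal{I})}^p$, while your closing remark on regularizing the test function is a justification the paper leaves implicit.
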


\begin{proof}
Fix $p \geq 2$ and multiply each equation in the kinetic system \eqref{eq:kinetic_opinion} by $f_J^{p-1}$, with respect to the corresponding index $J \in \mathcal{C}$. After integration in $w$, we initially obtain
\begin{equation} \label{eq:initial-regularity}
\begin{aligned}
\frac{1}{p}\frac{\dd}{\dd t} \| f_S \|_{L^p(\mathcal{I})}^p & = - \int_{-1}^1 \Big( K_T(f_S, f_I) + K_V(f_S) \Big) f_S^{p-1} \dd w + \frac{1}{\tau} \int_{-1}^1 Q_S(f_S,f_L) f_S^{p-1} \dd w, \\[2mm]
\frac{1}{p}\frac{\dd}{\dd t} \| f_E \|_{L^p(\mathcal{I})}^p & = \int_{-1}^1 K_T(f_S, f_I) f_E^{p-1} \dd w - \sigma_E \| f_E \|_{L^p(\mathcal{I})}^p + \frac{1}{\tau} \int_{-1}^1 Q_E(f_E,f_L) f_E^{p-1} \dd w, \\[2mm]
\frac{1}{p}\frac{\dd}{\dd t} \| f_I \|_{L^p(\mathcal{I})}^p & =  \sigma_E \int_{-1}^1 f_E f_I^{p-1} \dd w - \gamma_I \| f_I \|_{L^p(\mathcal{I})}^p + \frac{1}{\tau} \int_{-1}^1 Q_I(f_I,f_L) f_I^{p-1} \dd w, \\[2mm]
\frac{1}{p}\frac{\dd}{\dd t} \| f_R \|_{L^p(\mathcal{I})}^p & =  \gamma_I \int_{-1}^1 f_I f_R^{p-1} \dd w + \frac{1}{\tau} \int_{-1}^1 Q_R(f_R,f_L) f_R^{p-1} \dd w, \\[2mm]
\frac{1}{p}\frac{\dd}{\dd t} \| f_V \|_{L^p(\mathcal{I})}^p & = \int_{-1}^1 K_V(f_S) f_V^{p-1} \dd w + \frac{1}{\tau} \int_{-1}^1 Q_V(f_V,f_L) f_V^{p-1} \dd w
\end{aligned}
\end{equation}
Let us deal initially with the terms involving $K_T$ and $K_V$. For those appearing in the first equation, we easily get
\begin{equation*}
\begin{aligned}
-\int_{-1}^1 \Big( K_T(f_S, f_I) + & K_V(f_S) \Big) f_S^{p-1} \dd w \\[2mm]
& \leq \Big( \| \kappa_T \|_{L^\infty(\mathcal{I} \times \mathcal{I})} \| f_I \|_{L^1(\mathcal{I})} + \| \kappa_V \|_{L^\infty(\mathcal{I})} \Big) \| f_S \|_{L^p(\mathcal{I})}^p,
\end{aligned}
\end{equation*}
while the similar terms in the evolution equations for $f_E$ and $f_V$ can be treated by successively applying Hölder's and Young's inequalities with exponents $p$ and $\frac{p}{p-1}$, recovering
\begin{equation*}
\begin{aligned}
& \int_{-1}^1 K_T(f_S, f_I) f_E^{p-1} \dd w \leq \| \kappa_T \|_{L^\infty(\mathcal{I} \times \mathcal{I})} \| f_I \|_{L^1(\mathcal{I})} \left( \frac{1}{p} \| f_S \|_{L^p(\mathcal{I})}^p + \frac{p-1}{p} \| f_E \|_{L^p(\mathcal{I})}^p \right), \\[2mm]
& \int_{-1}^1 K_V(f_S) f_V^{p-1} \dd w \leq \| \kappa_V \|_{L^\infty(\mathcal{I})} \left( \frac{1}{p} \| f_S \|_{L^p(\mathcal{I})}^p + \frac{p-1}{p} \| f_V \|_{L^p(\mathcal{I})}^p \right).
\end{aligned}
\end{equation*}
A similar argument can be used to treat the mixed product terms appearing in the evolution equations for the distributions $f_I$ and $f_R$.

At last, we study the four terms involving the Fokker--Planck operators $Q_J$. For any $J \in \mathcal{C}$, these terms explicitly write
\begin{equation*}
\begin{aligned}
& \int_{-1}^1 Q_J(f_J,f_L) f_J^{p-1} \dd w \\[2mm]
& = \int_{-1}^1 f_J^{p-1} \left( \frac{\sigma}{2} \frac{\partial^2}{\partial w^2} \big( (1 - w^2) f_J \big) + \frac{\partial}{\partial w} \big( (w - m(t)) f_J \big) \right) \dd w \\[2mm]
&  = \frac{\sigma^2}{2} \int_{-1}^1 f_J^{p-1} \frac{\partial}{\partial w} \left( (1 - w^2) \frac{\partial}{\partial w} f_J \right) \dd w + \int_{-1}^1 f_J^{p-1} \frac{\partial}{\partial w} \Big( \big( (1 - \sigma^2) w - m(t) \big) f_J \Big) \dd w \\[4mm]
& =: \mathcal{T}_1 + \mathcal{T}_2,
\end{aligned}
\end{equation*}
by performing one derivation in the diffusion term. Now, integrating by parts $\mathcal{T}_1$ and using the second of the boundary conditions \eqref{regularity-BC}, allow to show that it is nonpositive
\begin{equation*}
\begin{aligned}
\mathcal{T}_1 & = - \frac{\sigma^2}{2} \int_{-1}^1 (1 - w^2) \frac{\partial}{\partial w} f_J^{p-1} \frac{\partial}{\partial w} f_J \dd w \\[2mm]
& = - \frac{(p-1) \sigma^2}{2} \int_{-1}^1 (1 - w^2) \left( \frac{\partial}{\partial w} f_J \right)^2 f_J^{p-2} \dd w \leq 0
\end{aligned}
\end{equation*}
since $f_J$ is nonnegative, and one can thus get rid of this term in the estimate. In order to deal with $\mathcal{T}_2$, instead, the idea is to exploit its antisymmetric nature. For this, we can on one side perform an integration by parts and use the first one of the boundary conditions \eqref{regularity-BC} to get
\begin{equation*}
\begin{aligned}
\mathcal{T}_2 & = - \int_{-1}^1 \frac{\partial}{\partial w} f_J^{p-1} \big( (1 - \sigma^2) w - m(t) \big) f_J \dd w, \\[2mm]
& = - (p-1) \int_{-1}^1 \big( (1 - \sigma^2) w - m(t) \big) f_J^{p-1} \frac{\partial}{\partial w} f_J \dd w.
\end{aligned}
\end{equation*}
On the other hand, by computing the derivative with respect to $w$ in $\mathcal{T}_2$, one also obtains
\begin{equation*}
\mathcal{T}_2 = (1 - \sigma^2) \int_{-1}^1 f_J^{p} \dd w + \int_{-1}^1 \big( (1 - \sigma^2) w - m(t) \big) f_J^{p-1} \frac{\partial}{\partial w} f_J \dd w.
\end{equation*}
Therefore, one can split $\mathcal{T}_2$ as $\mathcal{T}_2 = \frac{1}{p} \mathcal{T}_2 + \frac{p-1}{p} \mathcal{T}_2$ and use, as shown, the integration by parts on the first addend and the direct derivation in the second one, to get rid of the integral $ \int_{-1}^1 \big( (1 - \sigma^2) w - m(t) \big) f_J^{p-1} \frac{\partial}{\partial w} f_J \dd w$. In particular, one ends up with the final estimate
\begin{equation*}
\int_{-1}^1 Q_J(f_J,f_L) f_J^{p-1} \dd w \leq |1-\sigma^2| \| f_J \|_{L^p(\mathcal{I})}^p,
\end{equation*}
for any $J \in \mathcal{C}$.

Collecting all the bounds obtained for the different terms appearing in \eqref{eq:initial-regularity}, we finally infer that
\begin{equation*}
\frac{\dd}{\dd t} \sum_{J \in \mathcal{C}} \| f_J \|_{L^p(\mathcal{I})}^p \leq C \sum_{J \in \mathcal{C}}  \| f_J \|_{L^p(\mathcal{I})}^p,
\end{equation*}
with a constant $C = C\big( p, \sigma_E, \gamma_I, |1-\sigma^2|, \| \kappa_T \|_{L^\infty(\mathcal{I}\times \mathcal{I})}, \| \kappa_V \|_{L^\infty(\mathcal{I})}, \| f_I \|_{L^1(\mathcal{I})} \big)$, and we conclude on the regularity estimate by using Grönwall's Lemma.

\end{proof}

%%%%%%%%%%%%%%%%%%%%%%%%%    UNIQUENESS    %%%%%%%%%%%%%%%%%%%%%%%%%
\noindent \textbf{Uniqueness of solutions.} We then proceed by showing a uniqueness result, in the particular case where the two solutions possess the same mass fraction in the compartment of infected (and thus, in all compartments). This assumption is made for technical reasons, in order to linearize the Fokker--Planck operators and allow for a comparison between solutions using Lemma \ref{lemma:L1-comparison}.

\begin{proposition}[Uniqueness] 
Let $(f_J)_{J \in \mathcal{C}}$ and $(g_J)_{J \in \mathcal{C}}$ be two solutions of the kinetic system \eqref{eq:kinetic_hesitancy} where $Q_J$ is given by \eqref{op-FP-simple} and $K_T$, $K_V$ are defined by \fer{eq:K_def} and \fer{eq:KVdef} respectively, with $\kappa_T \in L^\infty(\mathcal{I} \times \mathcal{I})$ and $\kappa_V \in L^\infty(\mathcal{I})$. Suppose that $f_I$ and $g_I$ satisfy
\begin{equation*}
\int_{-1}^1 f_I(w, t) \dd w = \int_{-1}^1 g_I(w, t) \dd w \qquad \forall t > 0,
\end{equation*}
and let $f_J(w, 0) = f_J^{\mathrm{in}}(w) \in L^1(\mathcal{I})$ and $g_J(w, 0) = g_J^{\mathrm{in}}(w) \in L^1(\mathcal{I})$, for any $J \in \mathcal{C}$. If  $f_J^{\mathrm{in}}(w) = g_J^{\mathrm{in}}(w)$ for a.e. $w \in \mathcal{I}$, then $f_J(w,t) = g_J(w,t)$ for a.e. $w \in \mathcal{I}$ and for any $t > 0$.
\end{proposition}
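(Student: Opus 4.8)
The plan is to use the hypothesis $\int_{-1}^1 f_I(w,t)\,\dd w = \int_{-1}^1 g_I(w,t)\,\dd w$ to \emph{linearise} the system. Indeed, in \eqref{op-FP-simple} the drift coefficient $m(t) = \lambda_J \mu(\rho_I(t)) + \lambda_L m_L$ depends on a given solution only through the infected mass $\rho_I(t)$; under the stated assumption it is therefore one and the same function for $(f_J)_{J\in\mathcal{C}}$ and $(g_J)_{J\in\mathcal{C}}$, so that the operators $Q_J$ act on both families as the \emph{same} linear operator. Two further elementary facts will be used freely: summing the five equations of \eqref{eq:kinetic_hesitancy} and integrating over $\mathcal{I}$ shows that $\sum_{J\in\mathcal{C}}\rho_J(t)$ is conserved, hence each $\rho_J(t)$ stays bounded by the initial total mass; and, by the nonnegativity result proved above, both solutions are nonnegative.

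Set $h_J := f_J - g_J$ for $J \in \mathcal{C}$. Using the bilinearity of $K_T$, namely $K_T(f_S,f_I) - K_T(g_S,g_I) = K_T(h_S,f_I) + K_T(g_S,h_I)$, and the linearity of $K_V$, the differences $(h_J)_{J\in\mathcal{C}}$ solve the \emph{linear} system
\begin{equation*}
\begin{aligned}
\partial_t h_S &= -K_T(h_S,f_I) - K_T(g_S,h_I) - K_V(h_S) + \tfrac{1}{\tau} Q_S(h_S), \\
\partial_t h_E &= K_T(h_S,f_I) + K_T(g_S,h_I) - \sigma_E h_E + \tfrac{1}{\tau} Q_E(h_E), \\
\partial_t h_I &= \sigma_E h_E - \gamma_I h_I + \tfrac{1}{\tau} Q_I(h_I), \\
\partial_t h_R &= \gamma_I h_I + \tfrac{1}{\tau} Q_R(h_R), \\
\partial_t h_V &= K_V(h_S) + \tfrac{1}{\tau} Q_V(h_V),
\end{aligned}
\end{equation*}
where each $Q_J$ is the common linear Fokker--Planck operator with drift $m(t)$, equipped with no-flux boundary conditions, and the initial data vanish: $h_J(\cdot,0) \equiv 0$.

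I would then control $\Phi(t) := \sum_{J\in\mathcal{C}} \|h_J(t)\|_{L^1(\mathcal{I})}$. Multiplying each equation by $\sign_\e(h_J)$, integrating over $\mathcal{I}$ and integrating by parts with the no-flux boundary conditions exactly as in the proof of Lemma \ref{lemma:L1-comparison}, the contribution of $\tfrac{1}{\tau}Q_J(h_J)$ splits into a manifestly nonpositive term and a term vanishing as $\e \to 0$, hence it does not contribute in the limit. Since $0 \le \|f_I\|_{L^1(\mathcal{I})}, \|g_S\|_{L^1(\mathcal{I})} \le 1$ and $\kappa_T \in L^\infty(\mathcal{I}\times\mathcal{I})$, $\kappa_V \in L^\infty(\mathcal{I})$, the source terms obey $\|K_T(h_S,f_I)\|_{L^1(\mathcal{I})} \le \|\kappa_T\|_{L^\infty}\|h_S\|_{L^1(\mathcal{I})}$, $\|K_T(g_S,h_I)\|_{L^1(\mathcal{I})} \le \|\kappa_T\|_{L^\infty}\|h_I\|_{L^1(\mathcal{I})}$ and $\|K_V(h_S)\|_{L^1(\mathcal{I})} \le \|\kappa_V\|_{L^\infty}\|h_S\|_{L^1(\mathcal{I})}$, while the linear terms $\pm\sigma_E h_E$, $\pm\gamma_I h_I$ are handled by $|\sign_\e|\le 1$ (those with the favourable sign may simply be dropped). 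Summing the resulting differential inequalities over $J \in \mathcal{C}$ gives $\frac{\dd}{\dd t}\Phi(t) \le C\,\Phi(t)$ with $C = C(\|\kappa_T\|_{L^\infty}, \|\kappa_V\|_{L^\infty}, \sigma_E, \gamma_I)$; since $\Phi(0) = 0$, Grönwall's Lemma yields $\Phi \equiv 0$, i.e. $f_J = g_J$ a.e. in $\mathcal{I}$ for every $t > 0$ and every $J \in \mathcal{C}$.

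The crux of the matter—and the reason for the hypothesis on the infected mass—is the nonlinear dependence of $Q_J$ on $\rho_I$ through $\mu(\rho_I)$: without that hypothesis, $Q_J(f_J) - Q_J(g_J)$ would carry an additional term of the form $\partial_w\big((m_f(t) - m_g(t))\, g_J\big)$, which is not controlled by the $L^1$ contraction mechanism of Lemma \ref{lemma:L1-comparison} and would demand an independent estimate on $\rho_I^f - \rho_I^g$. Once that obstruction is removed, everything reduces to making the $\sign_\e$ computation rigorous—precisely as in Lemma \ref{lemma:L1-comparison}—and closing a Grönwall loop, which is routine.
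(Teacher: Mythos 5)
Your proposal is correct and follows essentially the same route as the paper: the equal-infected-mass hypothesis linearises the drift $m(t)$ in the Fokker--Planck operators, the bilinear splitting $K_T(f_S,f_I)-K_T(g_S,g_I)=K_T(f_S-g_S,f_I)+K_T(g_S,f_I-g_I)$ is exactly the bound the paper uses, and the $\sign_\e$ regularisation of Lemma \ref{lemma:L1-comparison} plus Grönwall closes the argument identically. The only cosmetic difference is that you invoke mass conservation and nonnegativity to bound $\|f_I\|_{L^1}$ and $\|g_S\|_{L^1}$ by the initial total mass, whereas the paper simply absorbs these norms into the Grönwall constant.
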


\begin{proof}
We shall prove $L^1$-uniqueness of the solutions. Let us define the mass fractions of infected for the two solutions as $\rho_I^f(t) = \int_{-1}^1 f_I(w, t) \dd w$ and $\rho_I^g(t) = \int_{-1}^1 g_I(w, t) \dd w$. Since by assumption $\rho_I^f(t) = \rho_I^g(t)$ for any $t > 0$, one deduces that also $\mu_I^f(t) = \mu_I^g(t)$ for any $t > 0$, where we have used the notations $\mu_I^f(t) = \mu(\rho_I^f(t))$ and $\mu_I^f(t) = \mu(\rho_I^f(t))$. This is equivalent to linearize the drift term in the Fokker--Planck operators, so that the differences $f_J(w,t) - g_J(w,t)$, $J \in \mathcal{C}$, are solutions over $\mathcal{I} \times \R_+$ of the system
\begin{equation} \label{eq:initial-uniqueness}
\begin{split}
\frac{\partial (f_S - g_S)}{\partial t} & = - \big( K_T(f_S,f_I) - K_T(g_S, g_I) \big) - K_V(f_S - g_S) + \dfrac{1}{\tau} Q_{S}(f_S - g_S), \\
\frac{\partial (f_E - g_E)}{\partial t} & = K_T(f_S,f_I) - K_T(g_S, g_I) - \sigma_E (f_E - g_E) + \dfrac{1}{\tau}Q_{E}(f_E - g_E), \\
\frac{\partial (f_I - g_I)}{\partial t} \ & = \sigma_E (f_E - g_E) - \gamma_I (f_I - g_I) + \dfrac{1}{\tau} Q_{I}(f_I - g_I), \\
\frac{\partial (f_R - g_R)}{\partial t} & = \gamma_I (f_I - g_I) + \dfrac{1}{\tau}Q_{R}(f_R - g_R), \\[2mm]
\frac{\partial (f_V - g_V)}{\partial t} & = K_V(f_S - g_S) + \dfrac{1}{\tau}Q_{V}(f_V - g_V).
\end{split}
\end{equation}
Notice that we have dropped the dependencies of the operators $Q_J$ on $f_L$ for a sake of clarity, since it is obvious that the static distribution of leaders is the same for the evolution systems of $(f_J)_{J \in \mathcal{C}}$ and $(g_J)_{J \in \mathcal{C}}$.

Now, the idea is to recover an $L^1$ comparison principle for the differences $f_J - g_J$ to get rid of the Fokker--Planck operators. This is done by introducing an increasing regularized approximation of the sign function $\sign_\e(x)$, $x \in \R$, and define through it a regularization $|f_J - g_J|_\e$ of $|f_J - g_J|$ for any $J \in \mathcal{C}$. Multiplying each equation in \eqref{eq:initial-uniqueness} by the corresponding $\sign_\e(f_J - g_J)$ and integrating in $w$, the same steps performed in the proof of Lemma \ref{lemma:L1-comparison} lead, by taking the limit $\e \to 0$, to the initial estimate
\begin{equation*}
\begin{split}
\frac{\dd}{\dd t} \| f_S - g_S \|_{L^1(\mathcal{I})} \ & \leq  \int_{-1}^1 \big| K_T(f_S,f_I) - K_T(g_S, g_I) \big| \dd w + \int_{-1}^1 \big| K_V(f_S - g_S) \big| \dd w, \\[2mm]
\frac{\dd}{\dd t} \| f_E - g_E \|_{L^1(\mathcal{I})} & \leq  \int_{-1}^1 \big| K_T(f_S,f_I) - K_T(g_S, g_I) \big| \dd w + \sigma_E \| f_E - g_E \|_{L^1(\mathcal{I})}, \\[2mm]
\frac{\dd}{\dd t} \| f_I - g_I \|_{L^1(\mathcal{I})} & \leq \sigma_E \| f_E - g_E \|_{L^1(\mathcal{I})} + \gamma_I \| f_I - g_I \|_{L^1(\mathcal{I})}, \\[4mm]
\frac{\dd}{\dd t} \| f_R - g_R \|_{L^1(\mathcal{I})} & \leq \gamma_I \| f_I - g_I \|_{L^1(\mathcal{I})}, \\[2mm]
\frac{\dd}{\dd t} \| f_V - g_V \|_{L^1(\mathcal{I})} & \leq \int_{-1}^1 \big| K_V(f_S - g_S) \big| \dd w.
\end{split}
\end{equation*}
To conclude, it only remains to control the integral terms involving the operators $K_T$ and $K_V$. Since $\kappa_T \in L^\infty(\mathcal{I} \times \mathcal{I})$ and $\kappa_V \in L^\infty(\mathcal{I})$ by assumption, one easily obtains the estimate
\begin{equation*}
\begin{aligned}
\int_{-1}^1 \big| K(f_S,f_I) - & K(g_S, g_I) \big| \dd w \\[2mm]
& \leq \| \kappa_T \|_{L^\infty(\mathcal{I} \times \mathcal{I})} \Big( \| f_I \|_{L^1(\mathcal{I})} \| f_S - g_S \|_{L^1(\mathcal{I})} + \| g_S \|_{L^1(\mathcal{I})} \| f_I - g_I \|_{L^1(\mathcal{I})} \Big),
\end{aligned}
\end{equation*}
and the obvious bound $\int_{-1}^1 \big| K_V(f_S - g_S) \big| \dd w \leq \| \kappa_V \|_{L^\infty(\mathcal{I})} \| f_S - g_S \|_{L^1(\mathcal{I})} $. Injecting these into the bound on the full system, we then conclude that
\begin{equation*}
\frac{\dd}{\dd t}\sum_{J \in \mathcal{C}} \| f_J - g_J \|_{L^1(\mathcal{I})} \leq C \sum_{J \in \mathcal{C}} \| f_J - g_J \|_{L^1(\mathcal{I})},
\end{equation*}
where $C = C\big( \sigma_E, \gamma_I, \| \kappa_T \|_{L^\infty(\mathcal{I}\times \mathcal{I})}, \| \kappa_V \|_{L^\infty(\mathcal{I})}, \| f_I \|_{L^1(\mathcal{I})}, \| g_S \|_{L^1(\mathcal{I})} \big)$, and an application of Grönwall's Lemma allows to conclude on the uniqueness, since by hypothesis $f_J$ and $g_J$ are equal initially.

\end{proof}

%%%%%%%%%%%%%%%%%%%%%%%%%%%%%%%%%%%%%%%%%%%%%%%%%%%%%%%%%%%%
%%%%%%%%%%%%%%%%%%%%%%%%%%%%%%%%%%%%%%%%%%%%%%%%%%%%%%%%%%%%
%%%%%%%%%%%%%%%%%%    SECTION 4: NUMERICAL EXPERIMENTS    %%%%%%%%%%%%%%%%%%
%%%%%%%%%%%%%%%%%%%%%%%%%%%%%%%%%%%%%%%%%%%%%%%%%%%%%%%%%%%%
%%%%%%%%%%%%%%%%%%%%%%%%%%%%%%%%%%%%%%%%%%%%%%%%%%%%%%%%%%%%
\section{Numerical results}\label{sec:num}

We conclude this work by presenting a series of numerical examples that highlight the main features of the SEIRV kinetic model \fer{eq:kinetic_hesitancy}. We focus our numerical investigation to the case where $K_T(\cdot,\cdot)$ and $K_V(\cdot)$ have the form
\begin{equation*}
\begin{aligned}
K_T(f_S,f_I)(w,t) & =  \frac{\beta_T}{4} (1 - w) f_S(w,t) (1 - m_I(t)) \rho_I(t), \\[2mm]
K_V(f_S)(w,t) & =  \frac{\beta_V}{2} (1 + w) f_S(w,t),
\end{aligned}
\end{equation*}
corresponding to $\kappa_T(w,w_*) = \frac{\beta_T}{4}(1-w)(1-w_*)$, obtained from \fer{eq:kappa_def} with $\alpha_T = 1$, and $\kappa_V(w) = \frac{\beta_V}{2}(1+w)$, obtained from \fer{eq:kappav_def} with $\alpha_V = 1$, for any $w \in \mathcal{I}$. 

The approximation of the system of kinetic equations is obtained through a second-order splitting strategy to deal separately with the Fokker--Planck opinion formation process and wtih the epidemiological operators. To this end, we introduce a discretization of any time interval $[0,T]$ with a time step $\Delta t > 0$ such that $ t^n = n\Delta t$, and we denote by $f_J^n$ the approximation of $f_J(w,t^n)$, $J \in \mathcal{C}$, for any $w \in \mathcal{I}$ and any $0 \leq n \leq N_t$ with $ N_t \Delta t = T$. We introduce a splitting strategy between the opinion formation step $f_J^* = \mathcal O_{\Delta t}(f_J^n)$
\begin{equation*}
\begin{cases}
&\frac{\partial}{\partial t} f_J^*(w) = \frac{1}{\tau} Q_J(f_J^*(w),f_L(w)), \\
&f_J^*(w,0) = f_J^n(w), \quad J \in \mathcal{C},
\end{cases}
\end{equation*}
and the epidemiological step $f^{**}_J =\mathcal E_{\Delta t}(f_J^{**})$
\begin{equation*}
\begin{cases}
\frac{\partial}{\partial t} f_S^{**}(w) & = - \frac{\beta_T}{4} (1-w) f_S^{**}(w) (1 - m_I^{**}) \rho_I^{**} - \frac{\beta_V}{2} (1+w) f_S^{**}(w), \\[4mm]
\frac{\partial}{\partial t} f_E^{**}(w) & = \frac{\beta_T}{4} (1-w) f_S^{**}(w) (1 - m_I^{**}) \rho_I^{**} - \sigma_E f_E^{**}(w), \\[4mm]
\frac{\partial}{\partial t} f_I^{**}(w) & = \sigma_E f_E^{**}(w) - \gamma_I f_I^{**}(w), \\[4mm]
\frac{\partial}{\partial t} f_R^{**}(w) & = \gamma_I f_I^{**}(w), \\[4mm]
\frac{\partial}{\partial t} f_V^{**}(w) & = \frac{\beta_V}{2} (1+w) f_S^{**}(w), \\[4mm]
f_J^{**}(w,0) & = f_J^*(w,\Delta t), \quad J \in \mathcal{C},. 
\end{cases}
\end{equation*}
The operators $Q_J(\cdot,\cdot)$ have been defined in \fer{op-FPH} and are complemented by no-flux boundary conditions. Hence, the approximated solution at time $t^{n+1}$ is given by the combination of the two introduced steps, i.e. if a first order splitting strategy is considered we have
\[
f^{n+1}_J(w) = \mathcal E_{\Delta t}(\mathcal O_{\Delta t}(f_J^n(w))),
\]
whereas the second-order Strang splitting method is obtained as
\[
f^{n+1}_J(w) = \mathcal E_{\Delta t/2}(\mathcal O_{\Delta t}(\mathcal E_{\Delta t/2}(f_J^n(w)))), 
\]
for any $J\in\mathcal C$. In particular we will solve the opinion consensus step with a  semi-implicit structure preserving scheme (SP) for Fokker--Planck-type equations, see Ref. \cite{PZ}. On the other hand, the numerical integration of the epidemiological step is obtained with a RK4 method. 
 
The main advantage of this approach is its ability to reproduce large time statistical properties of the exact steady state of the full system \eqref{eq:kinetic_hesitancy} with arbitrary accuracy, while also preserving the main physical properties of the solution, like positivity and entropy dissipation.
In all examples, we consider the epidemiological parameters enlisted in Table \ref{tab:1}.  These values are coherent with the existing literature for COVID-19 pandemics, see Ref. \cite{BDM,Gatto,Z}. Moreover, inside the drift terms of the Fokker--Planck operators we assume for simplicity that $\lambda_J = \lambda$ for any $J \in \mathcal{C}$. 

\begin{table}
\begin{center}
\begin{tabular}{| c | c | c |}
\hline
\cellcolor{lightgray}{Parameter} & \cellcolor{lightgray}{Meaning} & \cellcolor{lightgray}{Value} \\
$\beta_T$ & baseline contact  & 0.4 \\
$\beta_V$ & baseline vaccination & $5\cdot 10^{-3}$ \\
$\gamma_I$ & recovery rate & 1/7 \\
$\sigma_E$ & latency & 1/2 \\
$\sigma^2$ & diffusion constant & $10^{-2} $ \\
\hline
\end{tabular}
\label{tab:1}
\end{center}
\caption{Epidemiological parameters. }
\end{table}

%%%%%%%%%%%%%%%%%%%%%%    SUBSECTION: TEST 1    %%%%%%%%%%%%%%%%%%%%%%%%
\subsection{Test 1: Influence of opinion formation in epidemic dynamics}

In the first example, we study the evolution of the kinetic system \eqref{eq:kinetic_hesitancy} when only the influence of the function $\mu(\rho_I)$ is taken into account through the operators \eqref{op-FPH}. For this, we fix the compromise propensity $\lambda = 1$, which gives $\lambda_L = 0$ from the relation $\lambda + \lambda_L = 1$, meaning that we exclude any external effect of the leader--follower interactions. Furthermore, the opinion variation depends on the quantity defined in \fer{risk}, for which we consider the two values $n=4$ and $n=20$. In order to highlight the outcome of this test case, we start from a favorable scenario where the individuals have only positive opinions, which corresponds to the choices 
\begin{equation}
\label{eq:fJ0_t1}
f_J(w,0) = \rho_J(0) \chi_{[0,1]}(w), \qquad J \in \mathcal{C},
\end{equation}
for the initial conditions of the compartmental distributions. Furthermore, we have considered the initial masses $\rho_J(0) = 10^{-3}$, $J \in \{ E, I, R \}$, $\rho_V(0) = 0$ and $\rho_S(0) = 1 - \rho_E(0) - \rho_I(0) - \rho_R(0)$.

\begin{figure}
\centering
\includegraphics[scale = 0.31]{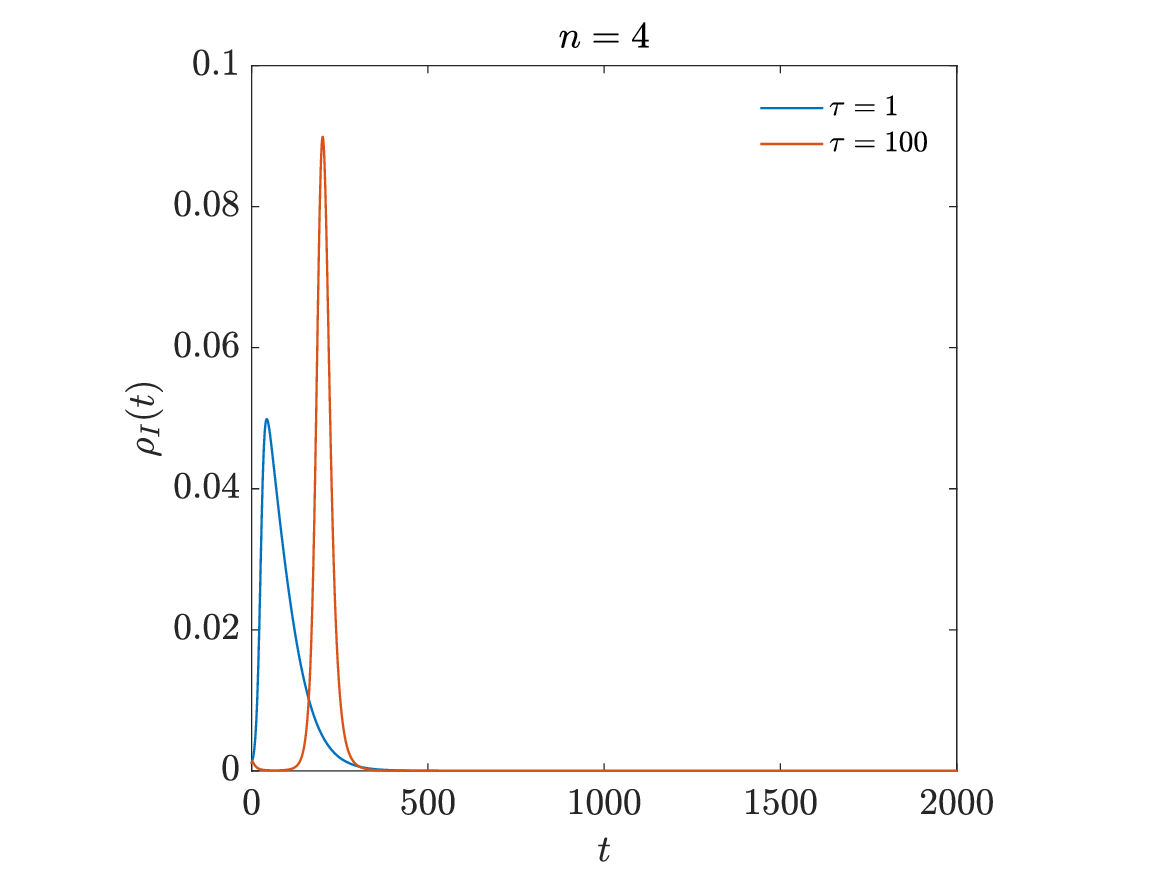} 
\includegraphics[scale = 0.31]{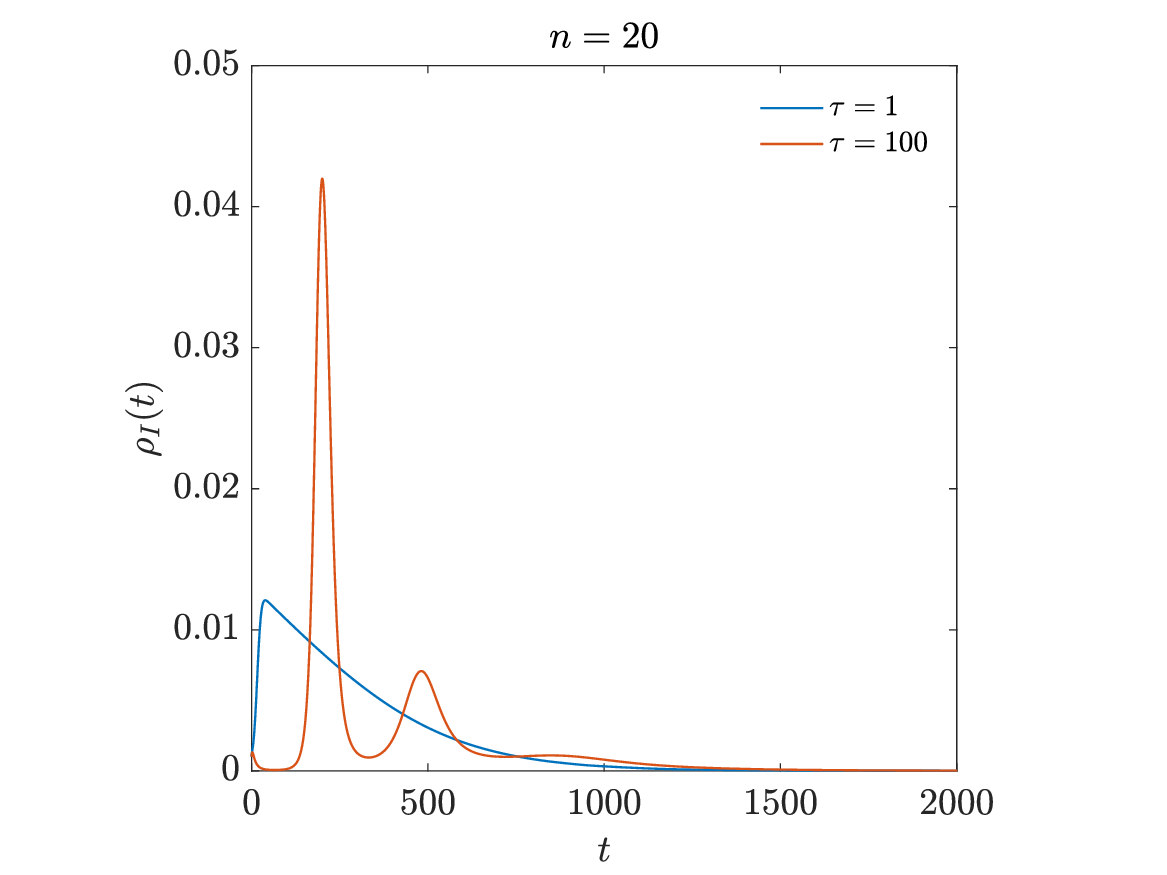} \\\includegraphics[scale = 0.31]{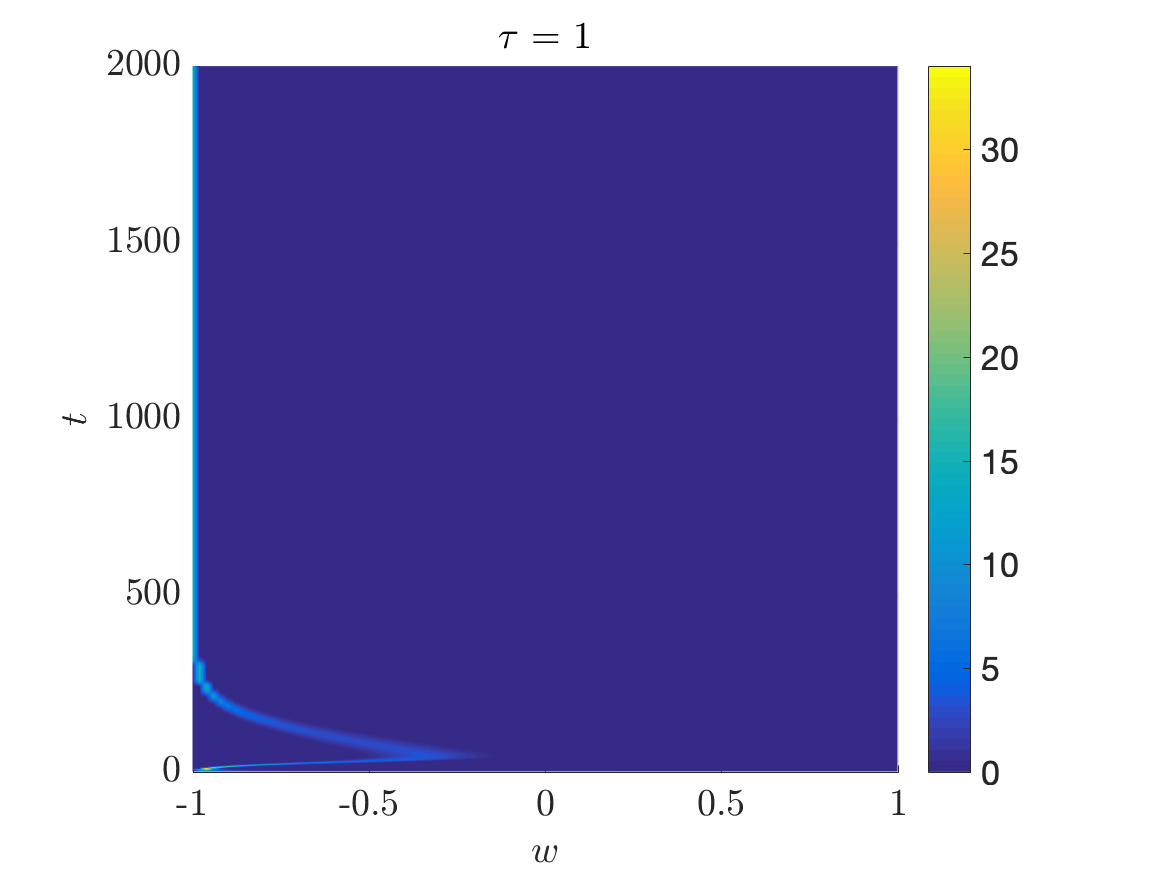} 
\includegraphics[scale = 0.31]{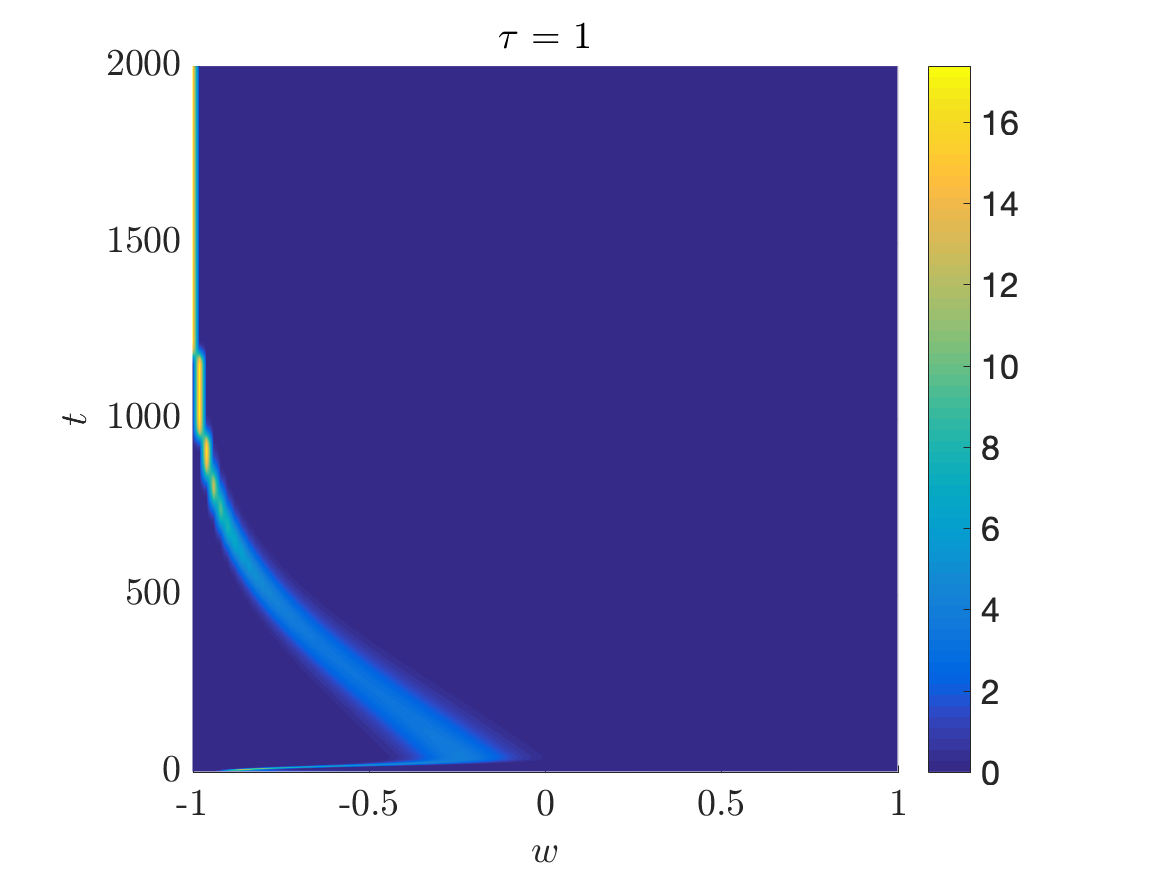} \\\includegraphics[scale = 0.31]{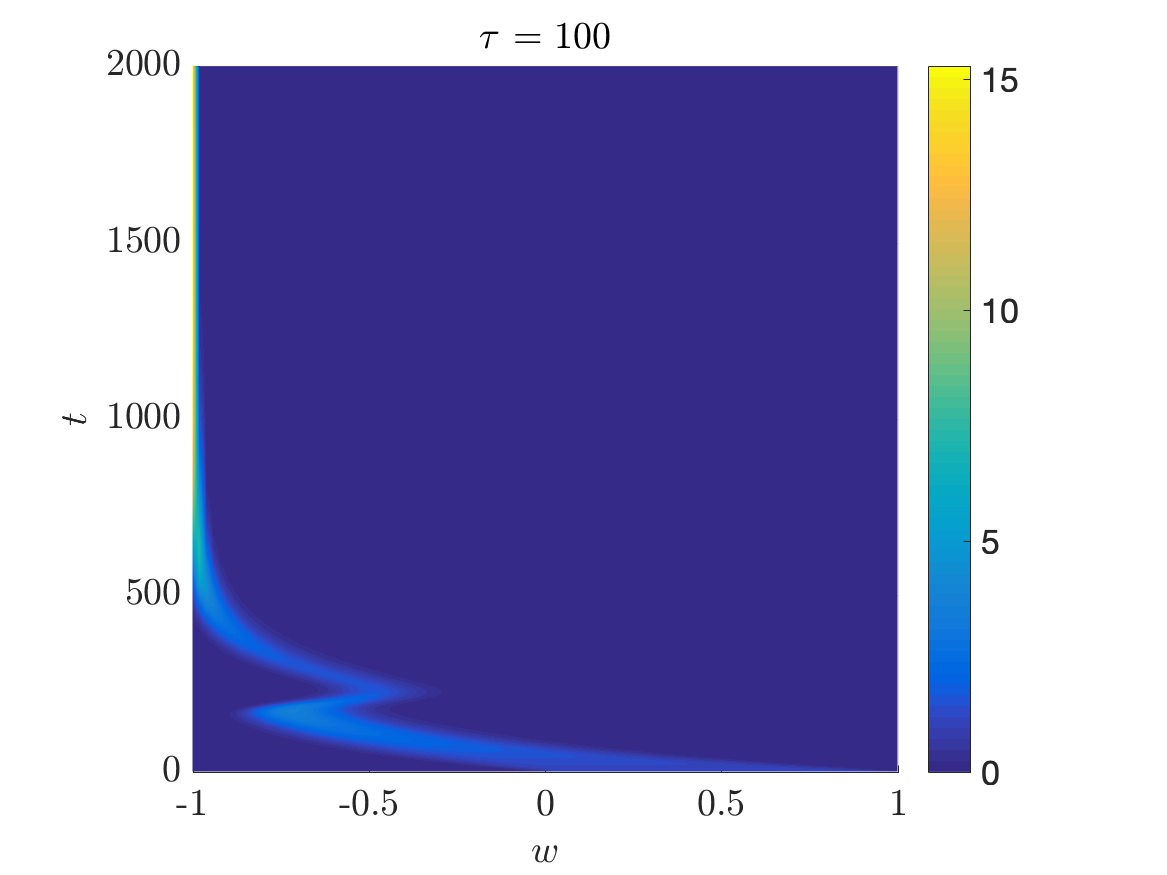}
\includegraphics[scale = 0.31]{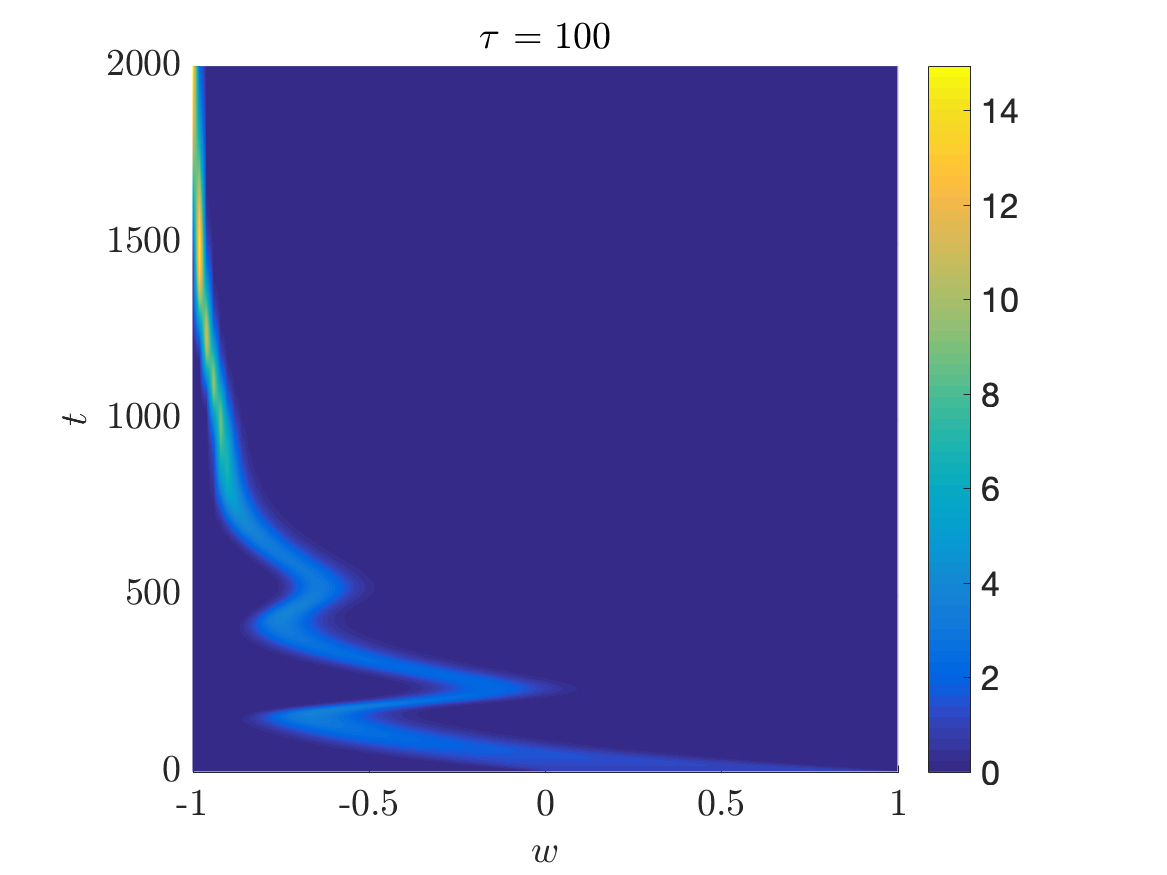}
\caption{\textbf{Test 1.} Influence of the function $\mu(\rho_I)$ on the evolution of the epidemic for different values of the exponent $n  = 4$ (left column) or $n = 20$ (right column). In both cases the numerical solution of the Fokker--Planck step has been performed through a semi-implicit SP scheme over a grid of $N = 101$ nodes and $\Delta t = 1$. The initial distribution has been defined in \fer{eq:fJ0_t1}. We represent also the evolution of $f_S(w,t)$ over $t \in [0,T]$ for two possible choices of the scale parameter $\tau = 1$ (second row) and $\tau = 10^2$ (third row). The epidemiological parameters are given in Table \ref{tab:1}. }
\label{fig:Test1}
\end{figure}

In Figure \ref{fig:Test1} we present the results of the test, where the dynamics has been considered over the time interval $[0,2000]$. The function $\mu(\rho_I)$ acts on each compartment by driving the agents' opinions toward the extreme negative opinion $w = -1$, as soon as the epidemic is contained. For low values of the parameter $n$, the effect of the Fokker-Planck operators on the dynamics is not able to balance out the epidemiological operators $K_T$ and $K_V$. Then, the kinetic system \eqref{eq:kinetic_hesitancy} behaves like a standard SEIRV model, where the initial exponential growth of $\rho_I(t)$ is followed by a decay to the disease-free equilibrium. Instead, when $n$ is chosen big enough, we observe the creation of a fluctuating behavior in the opinion of individuals, whose decision-making with respect to the containment measures changes depending on the current state of the epidemic itself. The increase in infected individuals leads to the adoption of more protective behaviors, in order to counter the spreading of the infection. On the contrary, when the infection is reduced, the population tends to develop a more negative sentiment toward the containment measures and we observe a resurgence in the number of infected, before the epidemic eventually dies out as $\rho_I(t)$ asymptotically reaches zero. This phenomenon is a typical example of epidemic waves, that our model is capable to reproduce.

%%%%%%%%%%%%%%%%%%%%%%    SUBSECTION: TEST 2    %%%%%%%%%%%%%%%%%%%%%%%%
\subsection{Test 2: Impact of leaders and vaccine hesitancy}

As second case, we consider a situation where the presence of a strong enough leaders' campaign can influence the behavior of a population and change the outcome of the epidemic in a positive manner, by delaying or reducing its impact. We first consider the case $G(w,w_*) \equiv 1$ for any $w, w_* \in \mathcal{I}$, so that the leaders' effect is solely encapsulated by their mean opinion $m_L$ and by the parameter $\lambda_L$. In particular, we show how the balance between $\lambda_J \mu(\rho_I)$ and $\lambda_L m_L$ affects the evolution of the compartmental average opinions, for suitable increasing values of the parameter $\lambda_L$. We study the cases $\lambda_L = 0.2,0.3$. Furthermore, we consider a time-independent opinion distribution of leaders of the form 
\begin{equation}
\label{eq:fL0}
f_L(w) = c_L \cdot \chi_{[0.8,1]}(w),
\end{equation}
where $c_L>0$ is a normalization constant and  $m_L = \int_{-1}^1 w f_L(w)dw=0.9$. The kinetic model defined in  \eqref{eq:kinetic_hesitancy} is integrated on the time interval $[0,2000]$ as before. 

As initial condition, we consider in Figure \ref{fig:Test2P} the case  
\begin{equation}
\label{eq:t0_test2}
f_J(w,0) = \rho_J(0) \chi_{[0,1]}(w), \qquad J \in \mathcal{C},
\end{equation}
in which the population is biased toward a positive initial opinion distribution and therefore is likely to adopt protective measures and to uptake the vaccine. Instead, in Figure \ref{fig:Test2N} we consider the opposite case
\begin{equation}
\label{eq:t0_test2b}
f_J(w,0) = \rho_J(0) \chi_{[-1,0]}(w),\qquad J \in \mathcal{C},
\end{equation}
in which the population has a negative initial opinion distribution and therefore is less likely to adopt the protective measures. As before, we also fix $\rho_J(0) = 10^{-3}$, $J \in \{ E, I, R \}$, $\rho_V(0) = 0$ and $\rho_S(0) = 1 - \rho_E(0) - \rho_I(0) - \rho_R(0)$.

\begin{figure}
\centering
\centering
\includegraphics[scale = 0.31]{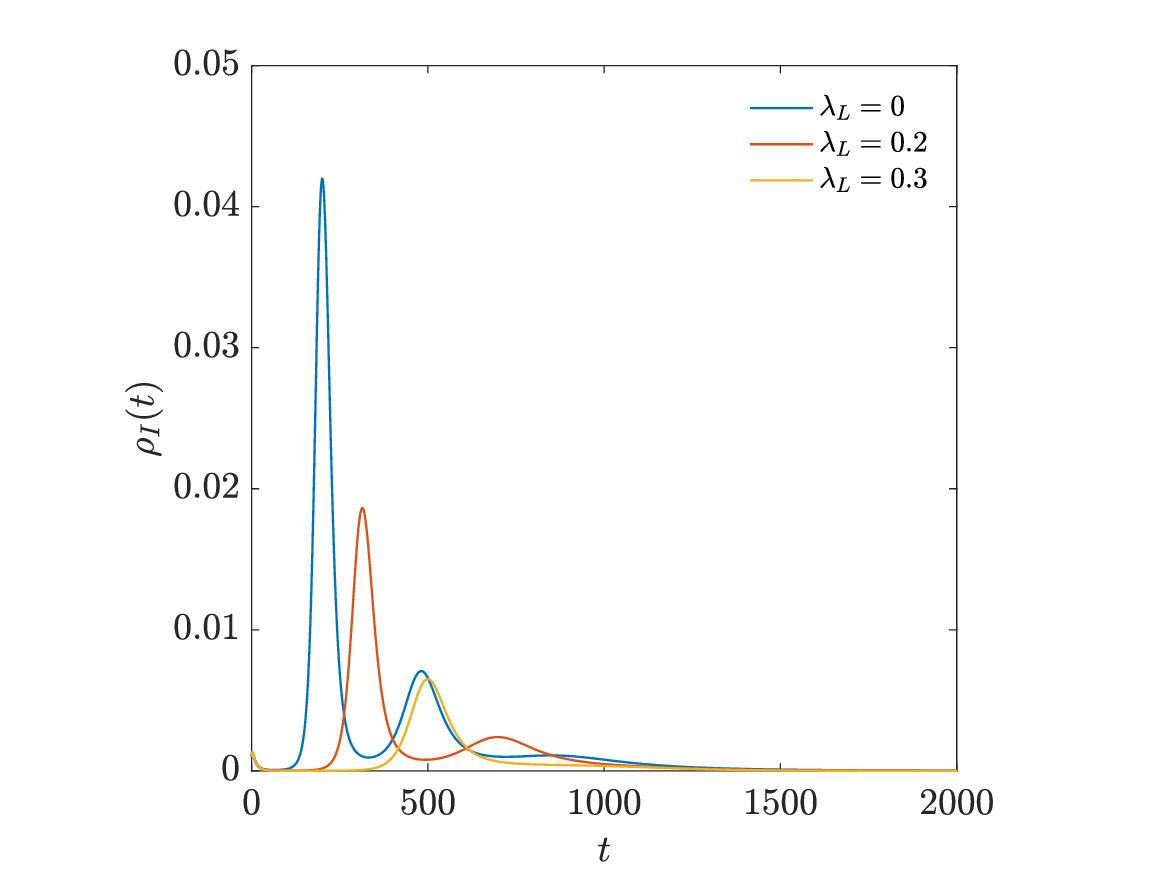} \includegraphics[scale = 0.31]{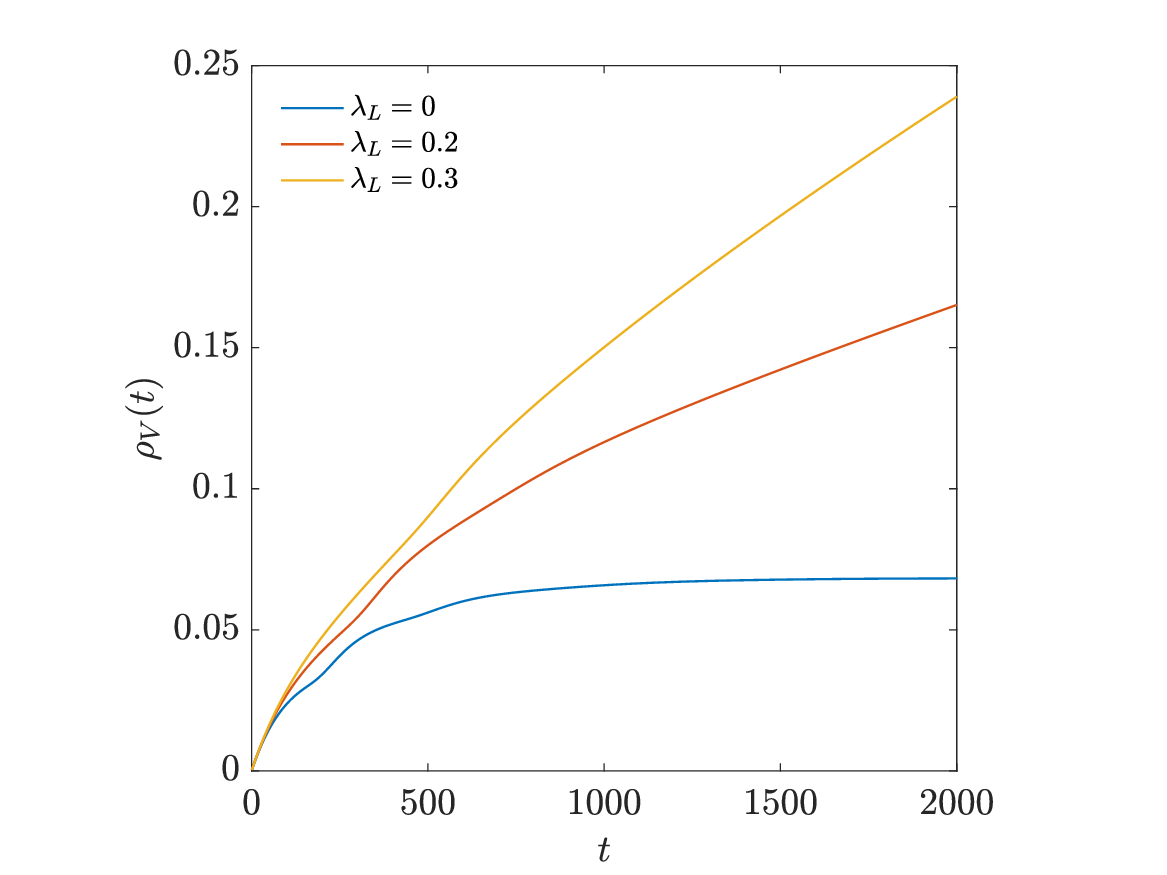}
\includegraphics[scale = 0.31]{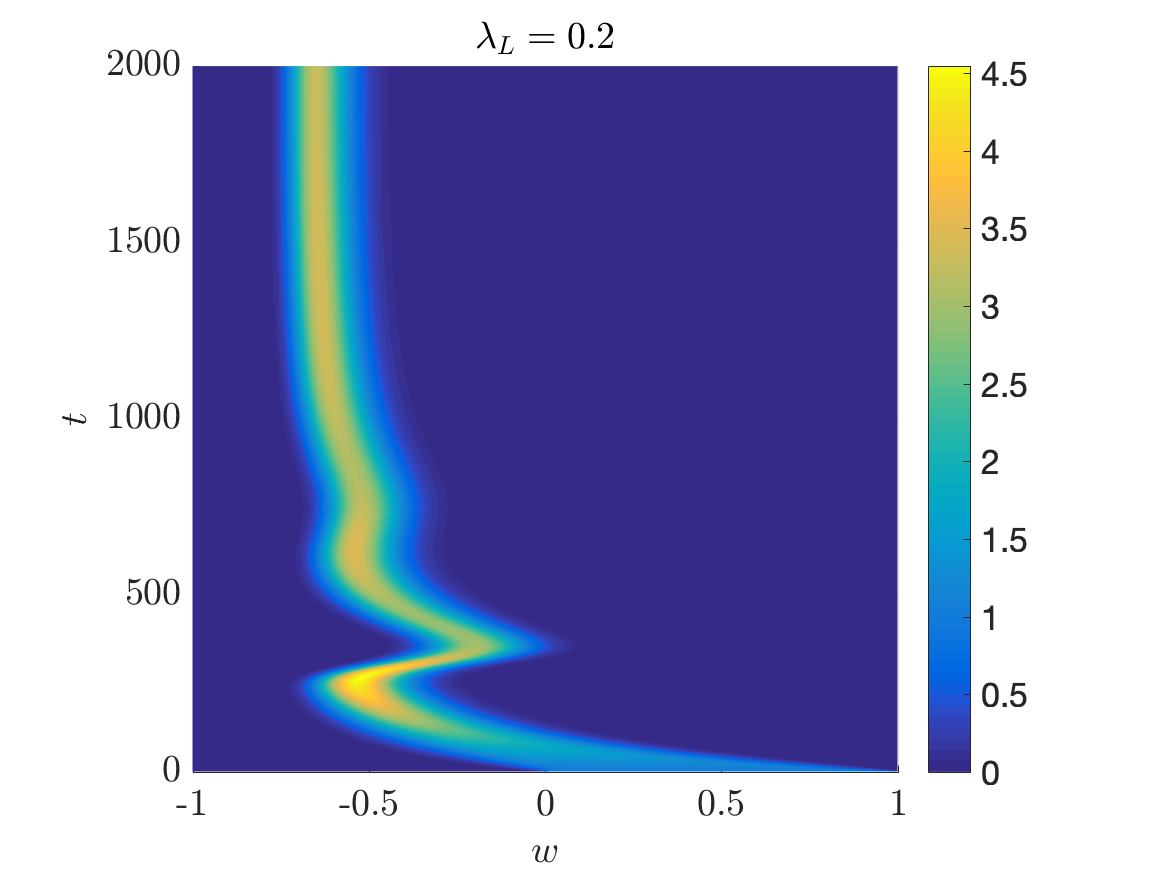} \includegraphics[scale = 0.31]{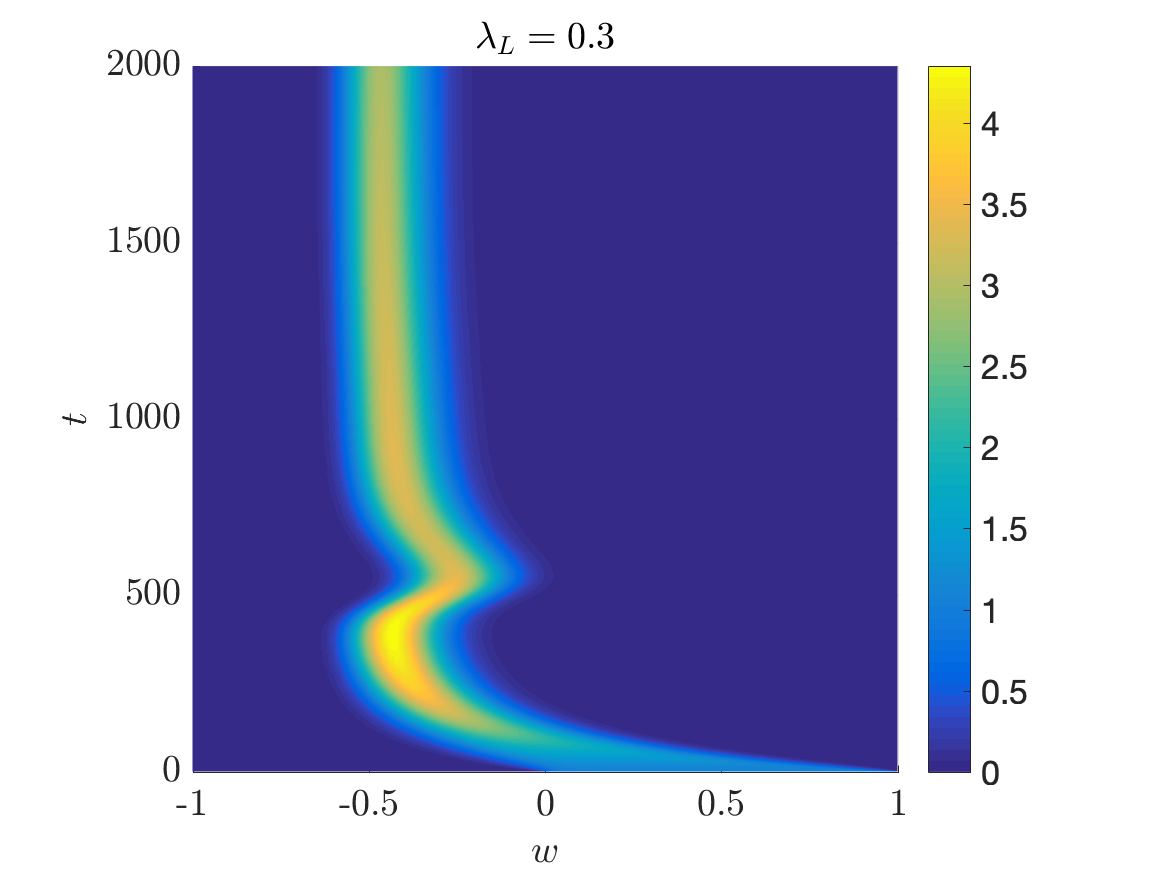}
\caption{\textbf{Test 2a}: Effects of the  interactions with the leaders on the evolution of the epidemic and on the vaccination campaign. Top row: evolution of $\rho_I(t)$ (left) and of $\rho_V(t)$ (right) over the considered time window. Bottom row: evolution of the kinetic density $f_S(w,t)$ in the cases $\lambda_L=0.2$ and $\lambda_L = 0.3$. The initial distribution has been defined in \fer{eq:t0_test2} and the solution of the Fokker--Planck step has been performed through a semi-implicit SP scheme over a grid of $N = 101$ nodes and $\Delta t = 1$. The epidemiological parameters have been defined in Table \ref{tab:1}. }
\label{fig:Test2P}
\end{figure}

\noindent The evolution of the kinetic densities $f_S$ and of the masses $\rho_I$ and $\rho_V$  are reported in Figure \ref{fig:Test2P} and in Figure \ref{fig:Test2N}. In particular, in Figure \ref{fig:Test2P} we have started from the initial condition \fer{eq:t0_test2}  and we observe, thanks to the follower-leader interactions, a progressive delay in the initial outbreak of the epidemic, together with  a  reduction in the magnitude of the infection's peaks. This effect is a natural consequence of the interaction with the leaders. In Figure \ref{fig:Test2N} we consider the initial condition \fer{eq:t0_test2b},  for which the initial opinions are negatively biased. The leaders' campaign is not effective  to delay the first epidemic's outburst, previously observed in the dynamics of $\rho_I(t)$. Anyway, it is still able to substantially reduce the peaks of the infection. In both scenarios, we also notice how the addition of a leader increases the number of vaccinated individuals $\rho_V(t)$ and drives the distribution of susceptible individuals toward a progressively more positive opinion as $\lambda_L$ increases, moving away from the extreme negative sentiment that is reached by $f_S(w,t)$ asymptotically in time when only $\mu(\rho_I)$ is present, for $\lambda_L = 0$.

\begin{figure}
\centering
\includegraphics[scale = 0.3]{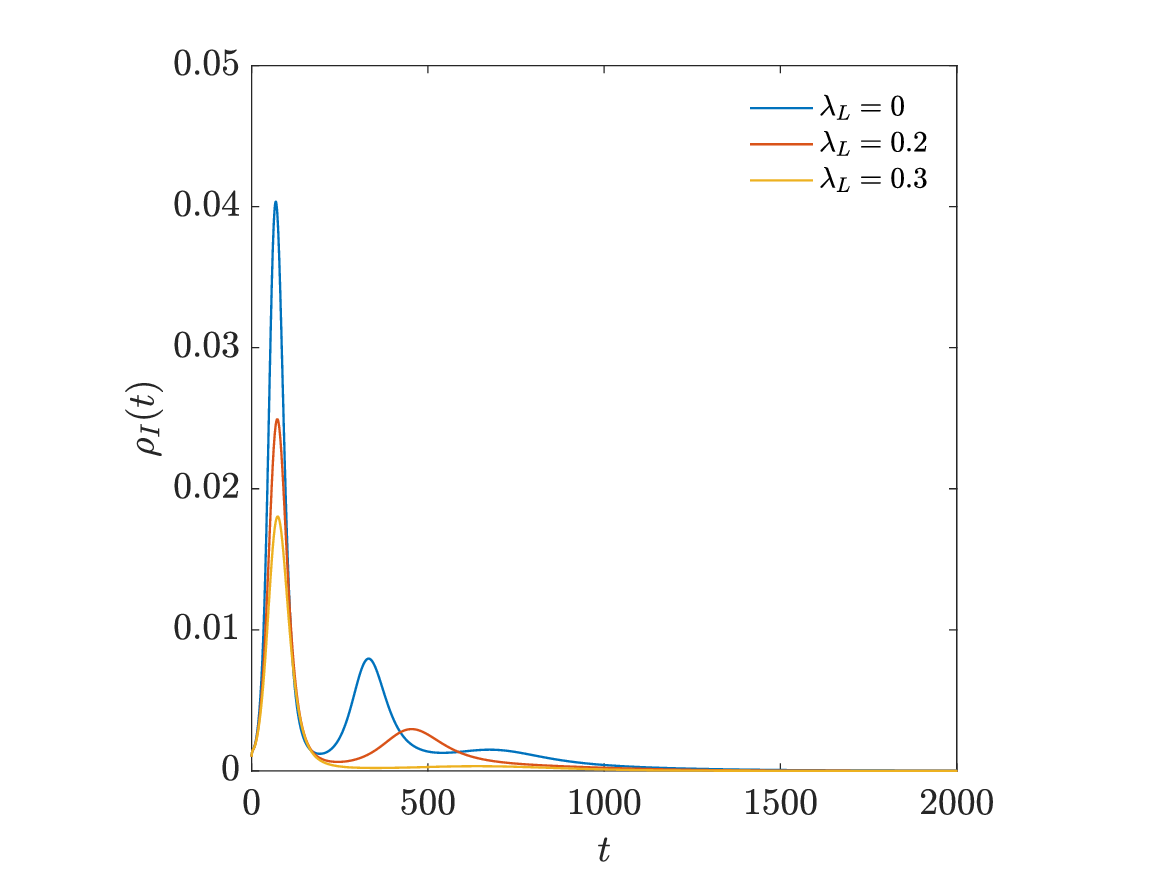} \includegraphics[scale = 0.3]{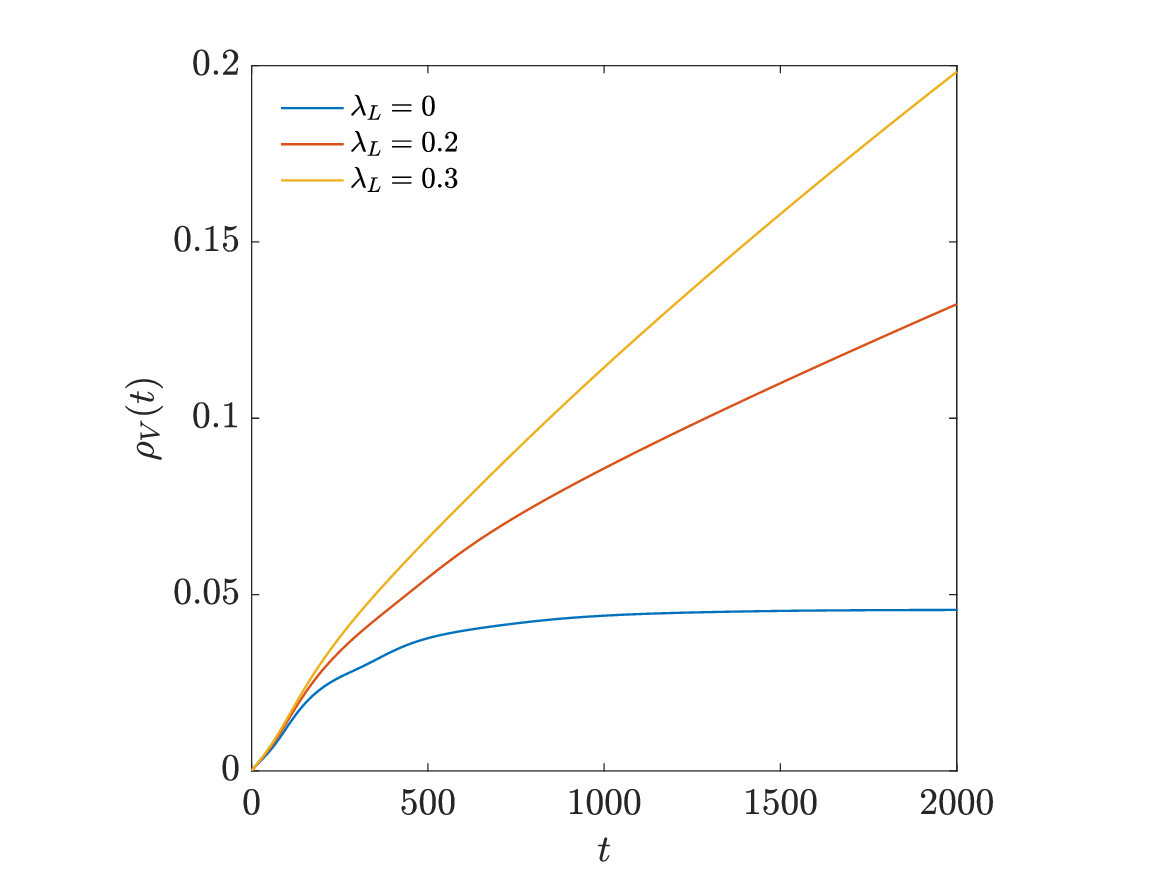}\\
\includegraphics[scale = 0.3]{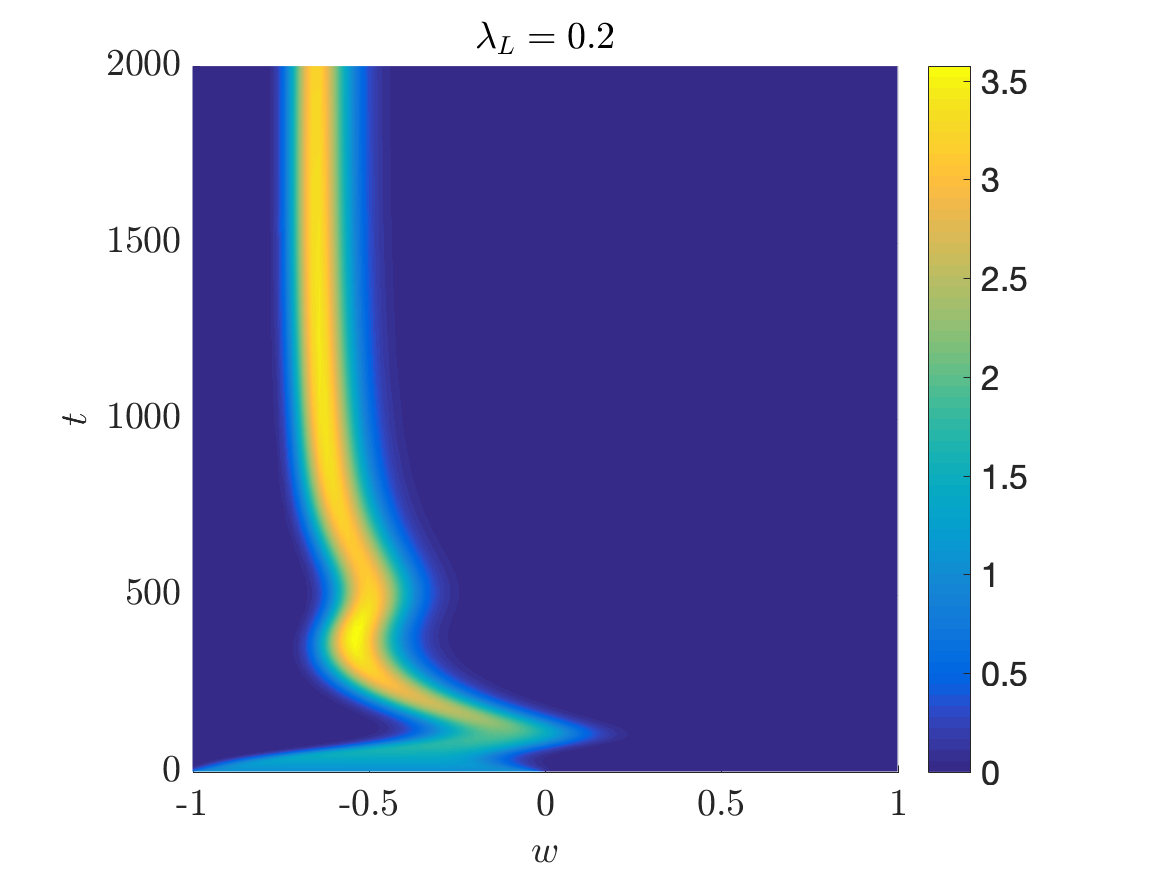} \includegraphics[scale = 0.3]{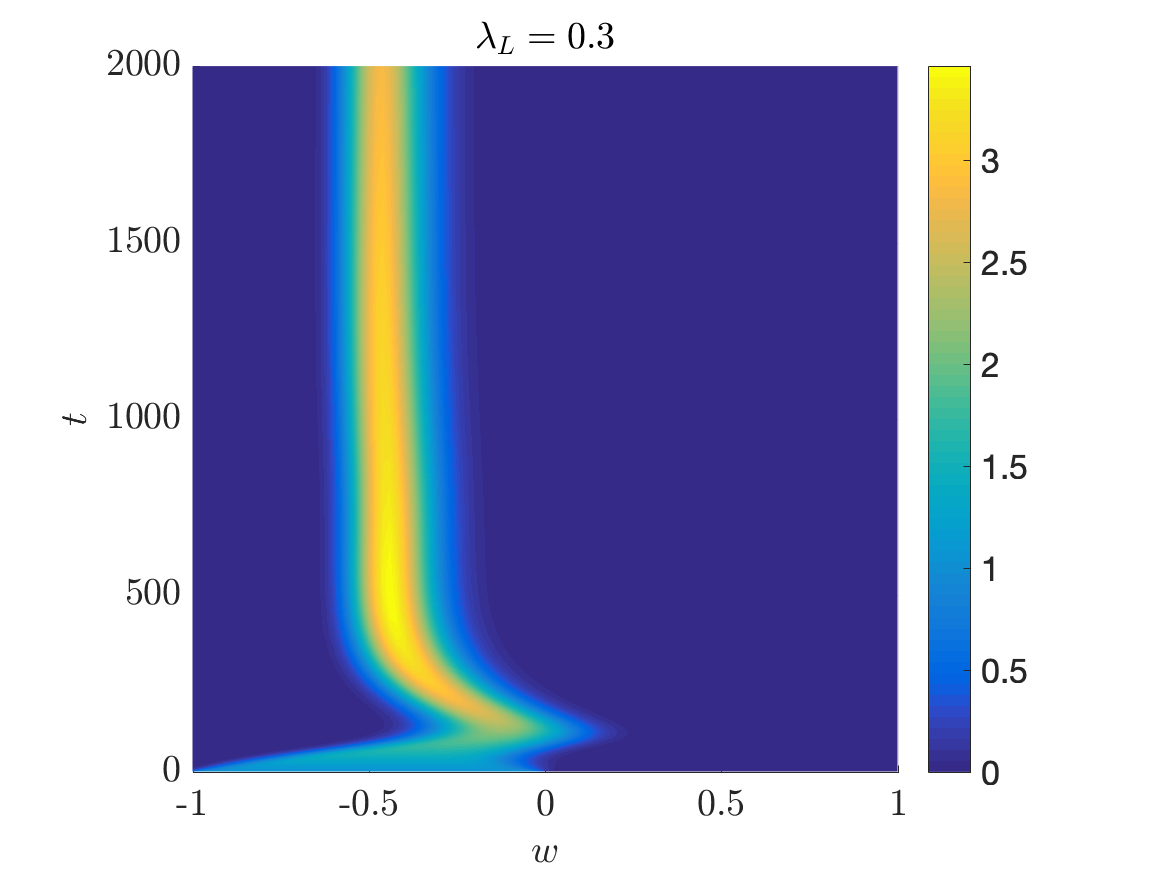}
\caption{\textbf{Test 2b}:  Effects of the  interactions with leaders on the evolution of the epidemic and on the vaccination campaign. Top row: evolution of $\rho_I(t)$ (left) and of $\rho_V(t)$ (right) over the considered time window. Bottom row: evolution of the kinetic density $f_S(w,t)$ in the cases $\lambda_L=0.2$ and $\lambda_L = 0.3$. The numerical solution of the Fokker--Planck step has been performed through a semi-implicit SP scheme over a grid of $N = 101$ nodes and $\Delta t = 1$. The initial distribution has been defined in \fer{eq:t0_test2} and is considered as positively biased. The epidemiological parameters have been defined in Table \ref{tab:1}.}
\label{fig:Test2N}
\end{figure}

%%%%%%%%%%%%%%%%%%%%%%    SUBSECTION: TEST 3    %%%%%%%%%%%%%%%%%%%%%%%%
\subsection{Test 3: Bounded confidence interactions}

We conclude with an example of bounded confidence-type dynamics, modeling the loss of global consensus inside a population via the formation of clusters in the opinion distributions of individuals \cite{HK}. Clustering effects are observed when $G$ takes the  form \eqref{BC}, so that the leaders act through the full operator $\mathcal{L}_L$ defined by \eqref{op-leader}. 

In the following, we consider the cases $\Delta = \frac{3}{2}$ and $\Delta = \frac{1}{2}$. We consider the leaders' opinion distribution $f_L(w)$ defined in \fer{eq:fL0} and we consider the case 
\begin{equation}
\begin{split}
f_J(w,0) = \rho_J(0)\dfrac{1}{2} \chi_{[-1,1]}(w), \qquad J \in \{S, E ,I, R,V\}, \\
\end{split}
\end{equation} 
with $\rho_J(0) = 10^{-3}$, $J \in \{ E, I, R \}$, $\rho_V(0) = 0$ and $\rho_S(0) = 1 - \rho_E(0) - \rho_I(0) - \rho_R(0)$. Furthermore, we focus on two values of the constant $\lambda_L = 0.2,0.3$. 

In Figure  \ref{fig:Test3a} we depict the evolution of the kinetic distribution with leader--follower interactions, weighted by a bounded confidence function with $\Delta = \frac{3}{2}$. We may observe that the presence of the population of leaders is capable to dampen the epidemic waves that appear in $\rho_I(t)$ for a sufficiently large value of $\lambda_L>0$. As a consequence, the evolution of $\rho_V(t)$ benefits from the leader--follower interactions, see the top row of  Figure \ref{fig:Test3a}. However, since the leaders cannot interact with the entire population as in the previous example, their overall containment effect is milder and gives rise to new interesting dynamics in the profile of $\rho_I(t)$, like the generation of additional waves in the case $\lambda_L = 0.2$, or the longer persistence of the epidemic over time in the case $\lambda_L = 0.3$. Similarly, we observe less vaccinated than in the previous tests and the profile of $\rho_V(t)$ exhibits a slower growth.

These phenomena may be explained by the evolution of the distribution function $f_S(w,t)$ for the considered choices of $\lambda_L = 0.2,0.3$. For $\lambda_L = 0.2$ we may notice that in the transient regime the opinions of the susceptible population tends to form two clusters due to the interactions with leaders. This behaviour is further highlighted by the case $\lambda_L = 0.3$, where we observe the formation of separate clusters. 

\begin{figure}
\centering
\includegraphics[scale = 0.3]{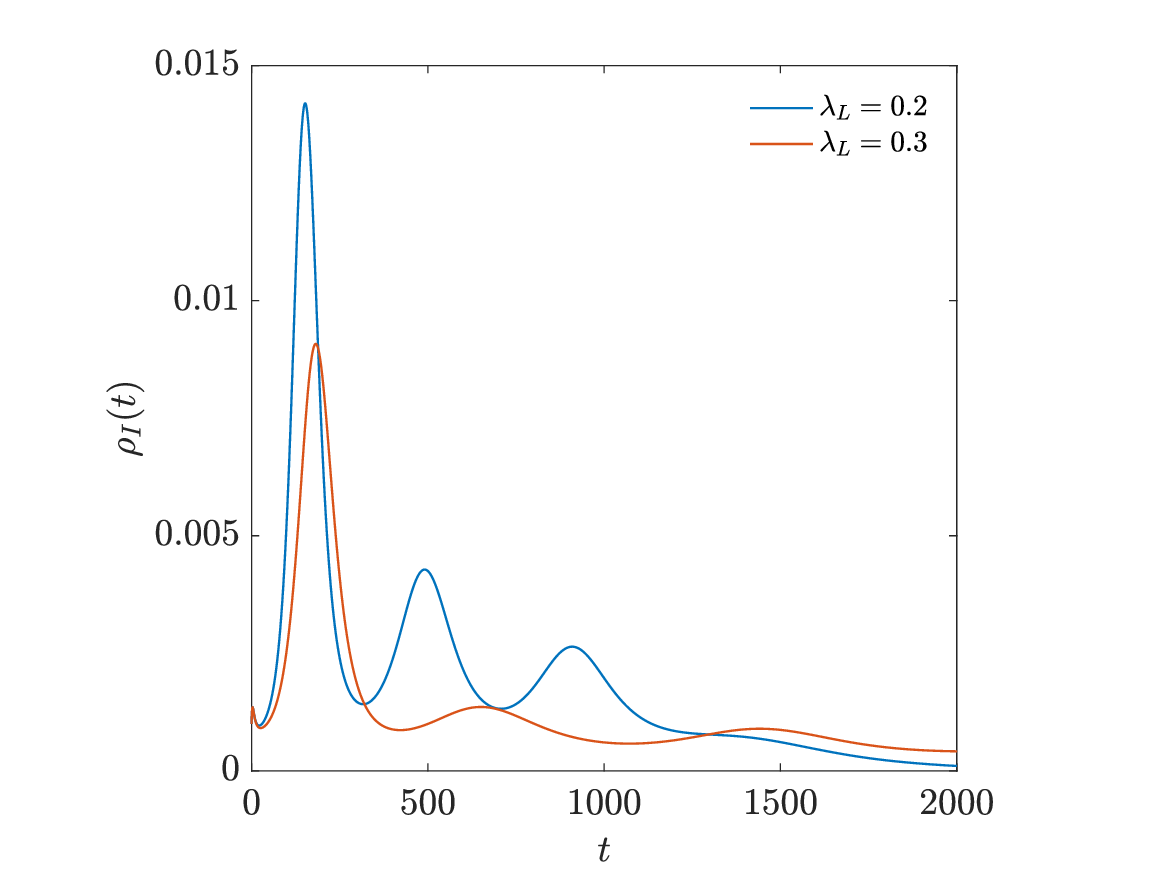} \includegraphics[scale = 0.3]{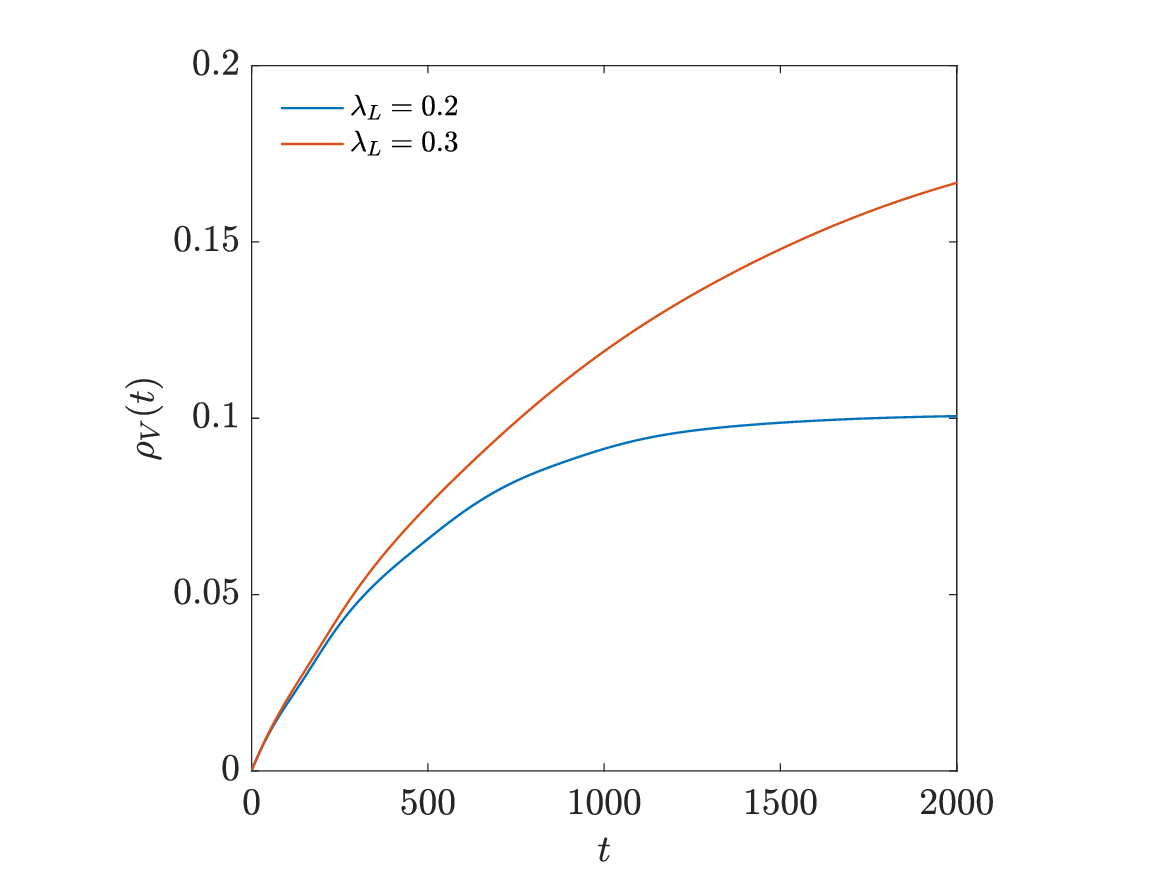}
\includegraphics[scale = 0.3]{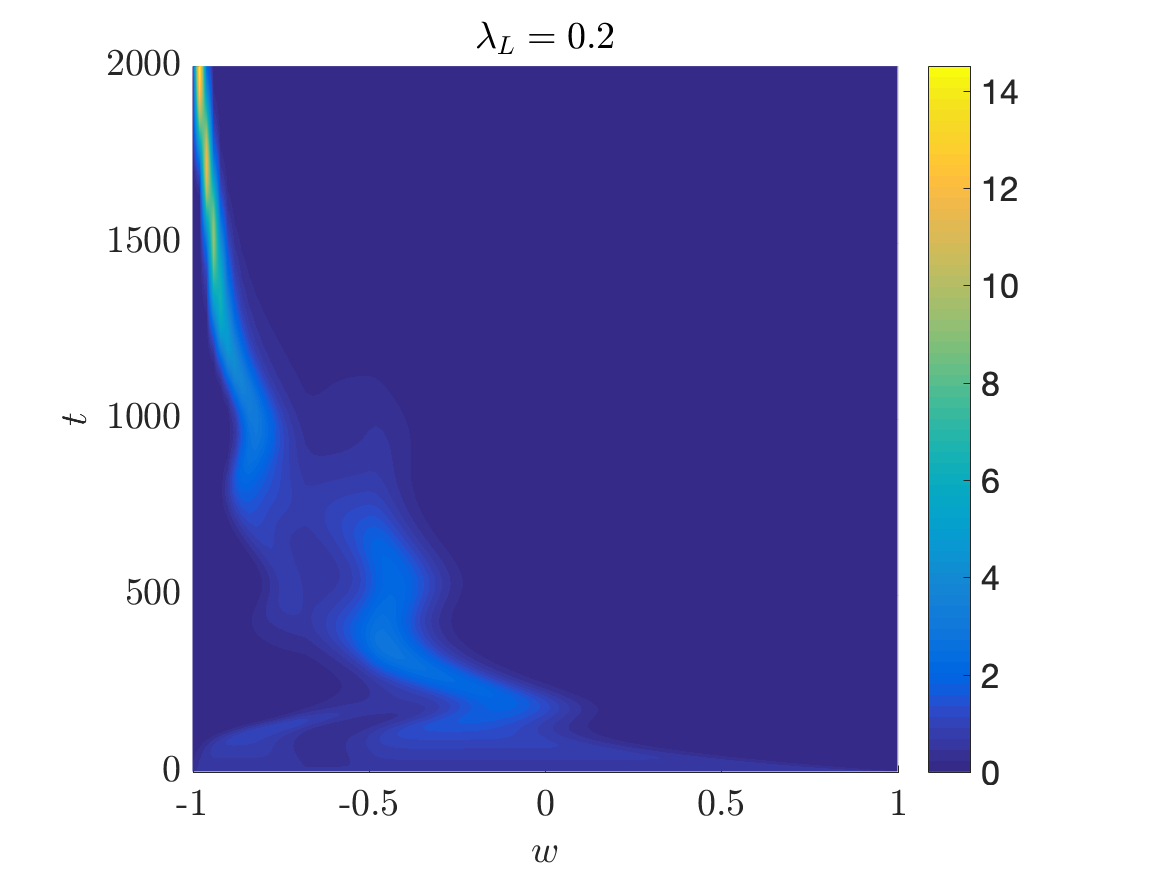} \includegraphics[scale = 0.3]{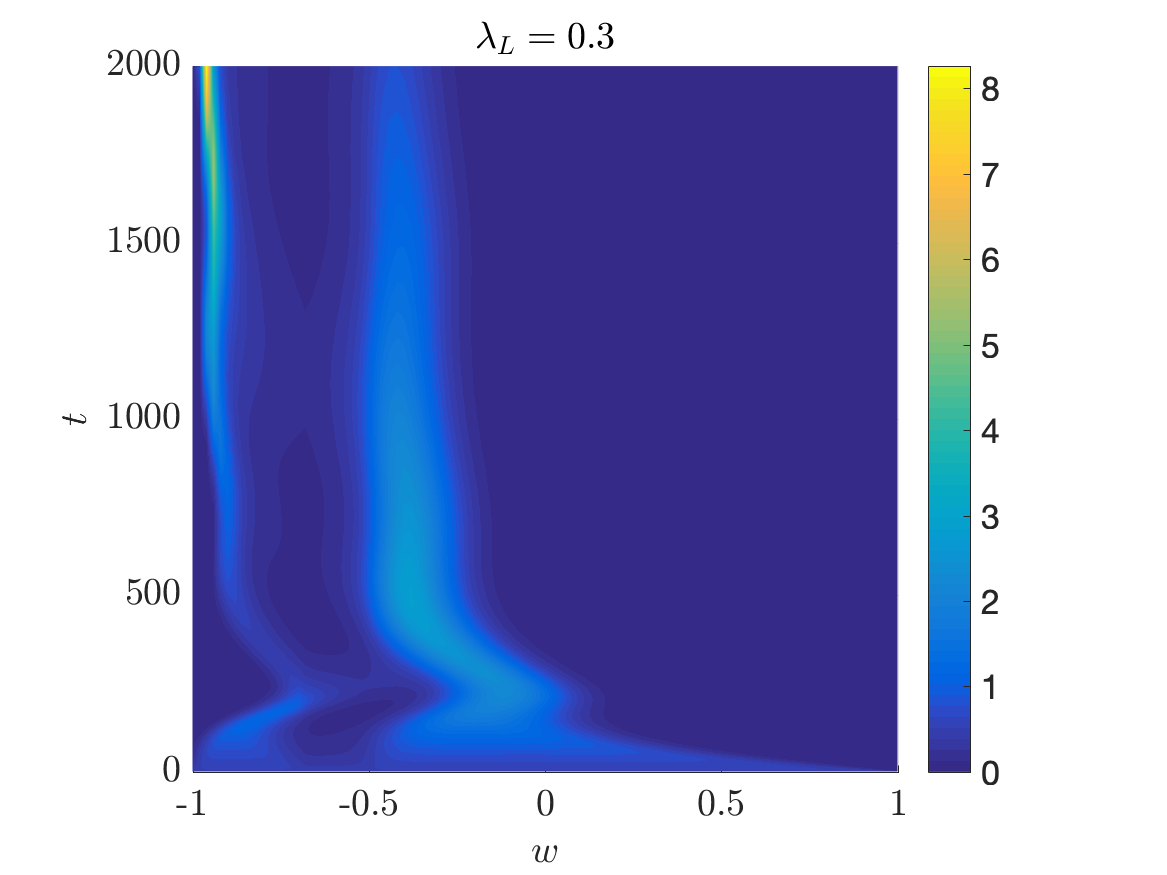}
\caption{\textbf{Test 3a}:  Effects of the interactions with leaders on the evolution of the epidemic and on the vaccination campaign. The leader--follower dynamics is described by \fer{op-FPH}, with bounded confidence interaction function  $G(\cdot,\cdot)$ as in \fer{BC} and $\Delta = \frac{3}{2}$. Top row: evolution of $\rho_I(t)$ (left) and of $\rho_V(t)$ (right) over the considered time window. Bottom row: evolution of the kinetic density $f_S(w,t)$ in the cases $\lambda_L=0.2$ and $\lambda_L = 0.3$. The numerical solution of the Fokker--Planck step has been performed through a semi-implicit SP scheme over a grid of $N = 101$ nodes and $\Delta t = 1$. The initial distribution has been defined in \fer{eq:t0_test2} and is considered as positively biased. The epidemiological parameters have been defined in Table \ref{tab:1}.} 
\label{fig:Test3a}
\end{figure}

In Figure \ref{fig:Test3b} we depict the evolution of the kinetic distribution with leader--follower interactions, weighted by a bounded confidence function with $\Delta = \frac{1}{2}$. In this case, the leaders can interact with a smaller fraction of the susceptible population, so that their influence is not enough to prevent the formation of epidemic waves. In particular, the evolution of the total number of vaccinated individuals displays small benefits from the presence of opinion leaders.

\begin{figure}
\centering
\includegraphics[scale = 0.3]{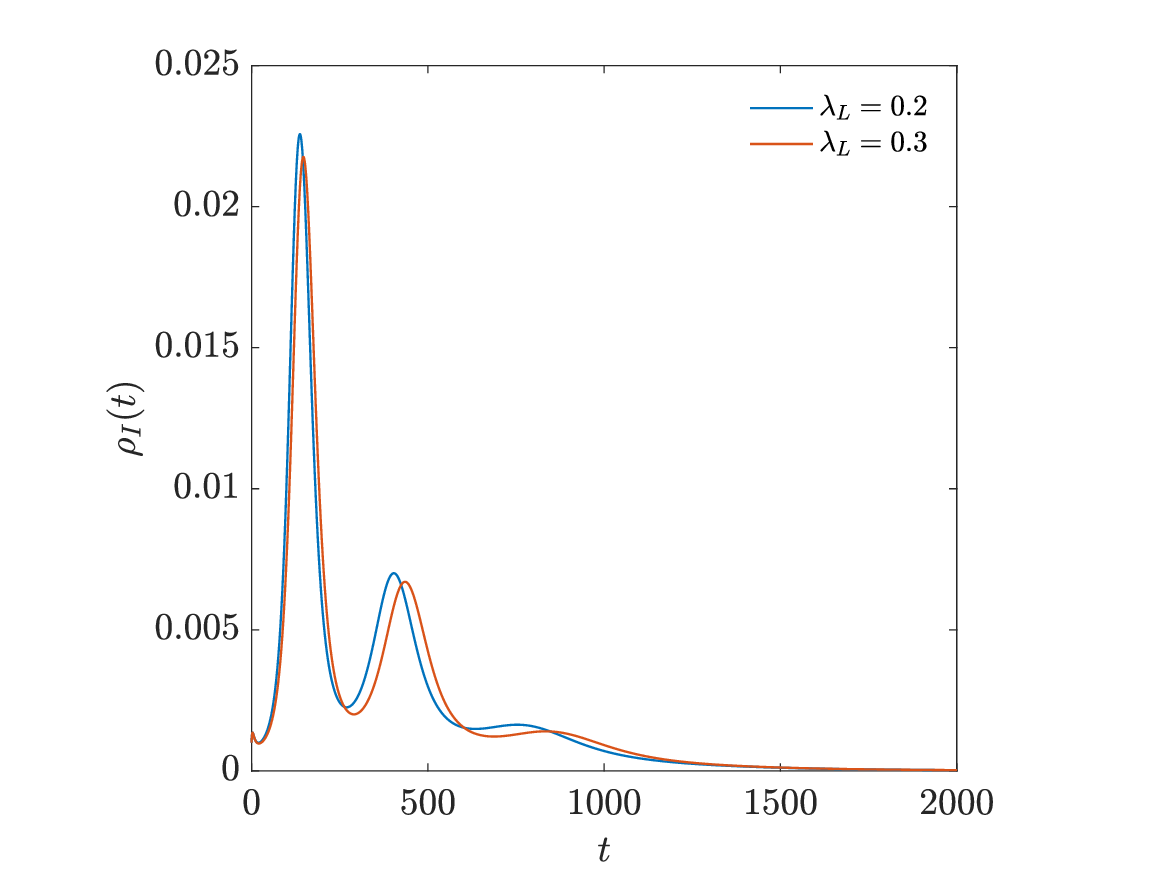} \includegraphics[scale = 0.3]{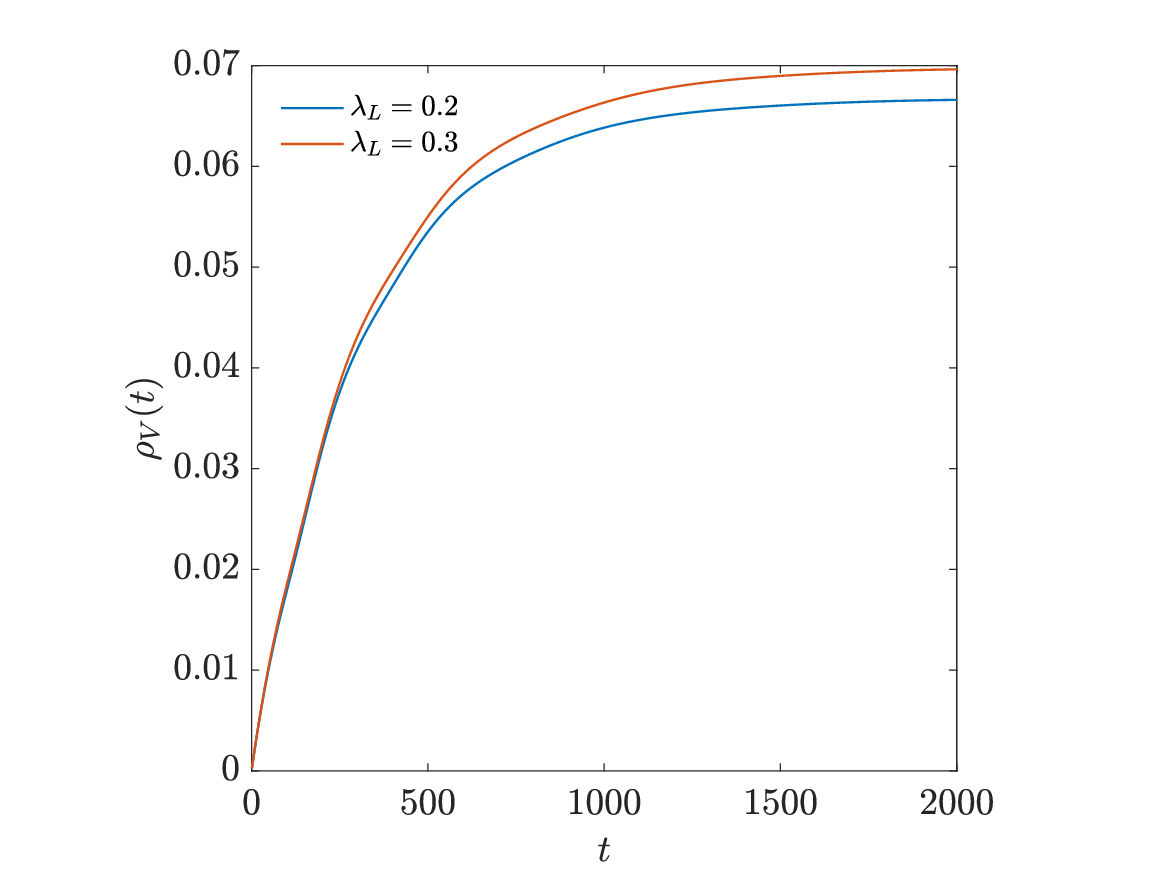}\\
\includegraphics[scale = 0.3]{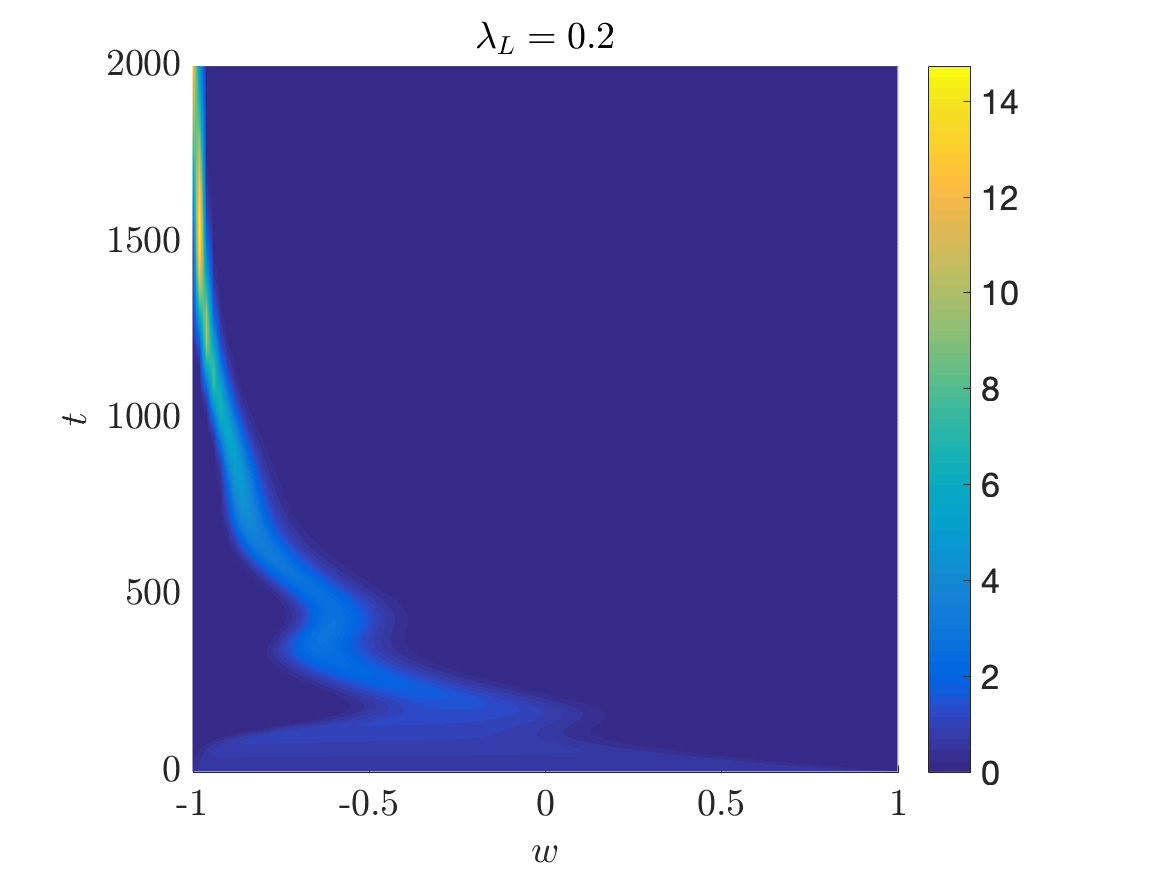} \includegraphics[scale = 0.3]{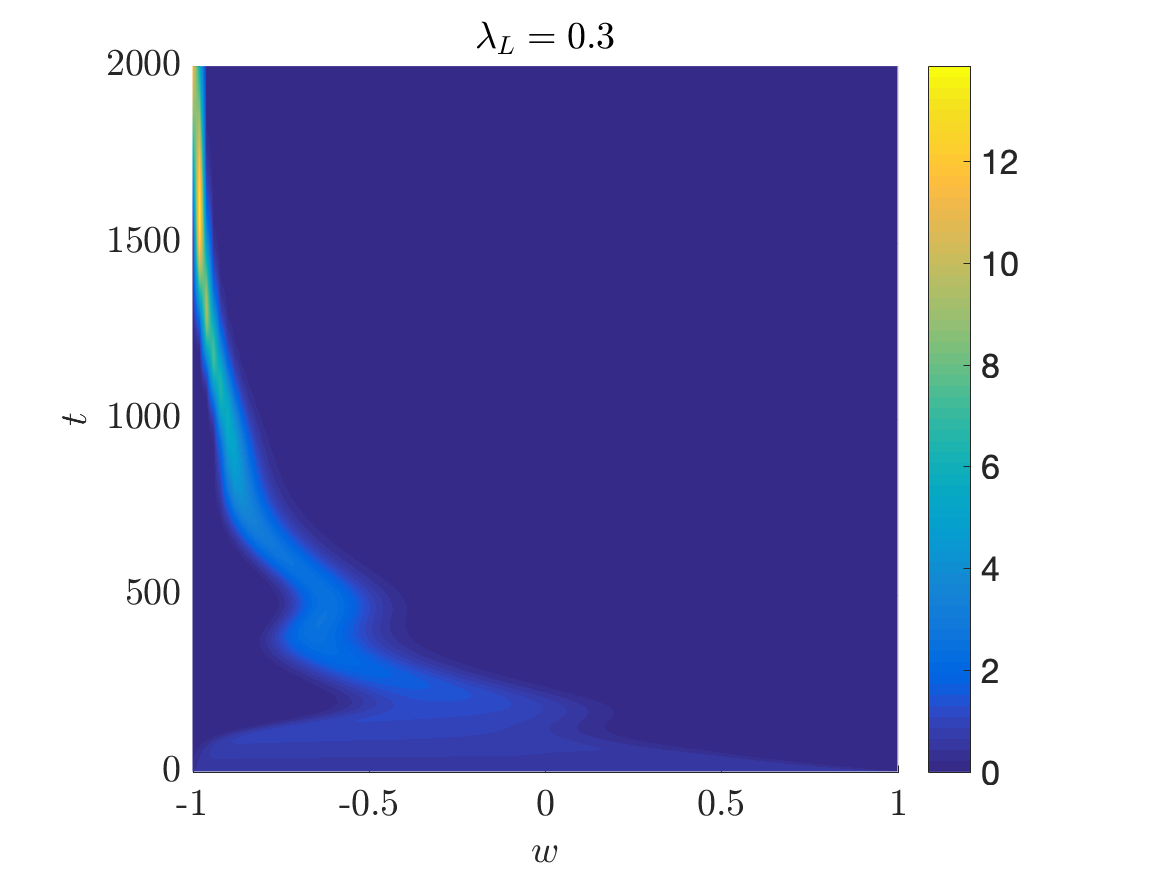}
\caption{\textbf{Test 3b}:  Effects of the  interactions with leaders on the evolution of the epidemic and on the vaccination campaign. The leader--follower dynamics is described by \fer{op-FPH}, with bounded confidence interaction function  $G(\cdot,\cdot)$ as in \fer{BC} and $\Delta = \frac{1}{2}$. Top row: evolution of $\rho_I(t)$ (left) and of $\rho_V(t)$ (right) over the considered time window. Bottom row: evolution of the kinetic density $f_S(w,t)$ in the cases $\lambda_L=0.2$ and $\lambda_L = 0.3$. The numerical solution of the Fokker--Planck step has been performed through a semi-implicit SP scheme over a grid of $N = 101$ nodes and $\Delta t = 1$. The initial distribution has been defined in \fer{eq:t0_test2} and is considered as positively biased. The epidemiological parameters have been defined in Table \ref{tab:1}.}
\label{fig:Test3b}
\end{figure}

%%%%%%%%%%%%%%%%%%%%%%%%%%%%%%%%%%%%%%%%%%%%%%%%%%%%%%%%%%%%
%%%%%%%%%%%%%%%%%%%%%%%%%%%%%%%%%%%%%%%%%%%%%%%%%%%%%%%%%%%%
%%%%%%%%%%%%%%%%%%%%    SECTION 5: CONCLUSION    %%%%%%%%%%%%%%%%%%%%%%%
%%%%%%%%%%%%%%%%%%%%%%%%%%%%%%%%%%%%%%%%%%%%%%%%%%%%%%%%%%%%
%%%%%%%%%%%%%%%%%%%%%%%%%%%%%%%%%%%%%%%%%%%%%%%%%%%%%%%%%%%%
\section*{Conclusion}
Among other social aspects of opinion formation which deserve to be investigated, vaccine hesitancy is one of the most important, in reason  of its implications in improving social health \cite{BDMOG,BRBU,DMLM,Tsao,Trentini}. 
Recently, this phenomenon has been shown to be in close relationship with a positive evolution of the COVID-19 pandemic (cf.  Ref. \cite{Kum} and the references therein).

Following the recent contributions of Refs. \cite{DPaTZ,DPTZ,DTZ,Z}, in this paper we have studied the evolution of the distribution function  of the vaccine hesitancy of a population in presence of the spreading of a disease, described through  a classical compartmental model \cite{DH}. The coupling disease--hesitancy has been considered to act in both directions. From one side, it has been assumed that the presence of the disease leads in general to a decrease of the vaccine hesitancy  of individuals, both because of the perceived increase in risks associated with exposure to the infection \cite{DC,VBA}, and because of the presence of a possible pro-vaccination campaign, characterized in this work by so-called opinion leaders \cite{DMPW,DW}. On the other side,  phases of stagnation or decline in the epidemic spreading naturally lead to a growth in the vaccine hesitancy of the population, modeled here by acting on the various compartmental average vaccination rates, through the assumption of a dependency on the number of infected individuals.

Numerical experiments have confirmed the ability of the model to reproduce dampened epidemiological waves due to differing risk perceptions, which follow from a varying intensity of the disease and result in a variability of the vaccine hesitancy. Moreover, we have tested and quantified in various situations the presence of leaders, which leads to an improvement in the vaccine hesitancy of the population, opening the way to make use of the  model for possible forecasts. Future studies will aim to balance the parameters of the model, by resorting to existing experimental data.

\vskip 3cm
\noindent \textbf{Acknowledgements.} This work has been written within the activities of GNFM group of INdAM (National Institute of High Mathematics). AB and MZ acknowledge the support of MUR-PRIN2020 Project No.2020JLWP23 (Integrated Mathematical Approaches to Socio-Epidemiological Dynamics). AB also acknowledges partial support from the Austrian Science Fund (FWF), within the Lise Meitner project number M-3007 (Asymptotic Derivation of Diffusion Models for Mixtures). GT wishes to acknowledge partial support by IMATI, Institute for Applied Mathematics and Information Technologies “Enrico Magenes”, Via Ferrata 5 Pavia, Italy.

\newpage
%%%%%%%%%%%%%%%%%%%%%%%%%%%%%%%%%%%%%%%%%%%%%%%%%%%%%%%%%%%%
%%%%%%%%%%%%%%%%%%%%%%%%%%%%%%%%%%%%%%%%%%%%%%%%%%%%%%%%%%%%
%%%%%%%%%%%%%%%%%%%%%%%%    BIBLIOGRAPHY    %%%%%%%%%%%%%%%%%%%%%%%%%
%%%%%%%%%%%%%%%%%%%%%%%%%%%%%%%%%%%%%%%%%%%%%%%%%%%%%%%%%%%%
%%%%%%%%%%%%%%%%%%%%%%%%%%%%%%%%%%%%%%%%%%%%%%%%%%%%%%%%%%%%

\setlength\parindent{0pt}

\end{document}